\newtheorem{Twierdzenie}{Theorem}[section]
\newtheorem{Definicja}{Definition}[section]
\newtheorem{Lemat}{Lemma}[section]
\newtheorem{Uwaga}{Remark}[section]
\newtheorem{Wniosek}{Corollary}[section]
\newcommand{\zmianamd}[1]{\textcolor{blue}{#1}}
\title{An example of algebraically general para-K\"ahler Einstein space}
\author{$\textrm{Adam Chudecki}^{a,b,c}$, \ \ $\textrm{Micha\l{} Dobrski}^{b,d}$}
\begin{document}

\maketitle

$^a$ Center of Mathematics and Physics, Lodz University of Technology, Al. Politechniki 11, 90-924 Lodz, Poland. 
\newline
$^b$ Institute of Physics, Faculty of Technical Physics, Information Technology and Applied Mathematics, Lodz University of Technology, Wólcza\'nska 217/221, 93-005 Lodz, Poland.
\newline
$^c$ adam.chudecki@p.lodz.pl
\newline
$^d$ michal.dobrski@p.lodz.pl
\newline
\newline
\newline
\textbf{Abstract}. 
\newline
Algebraically general para-K\"ahler Einstein spaces equipped with 3D algebras of infinitesimal symmetries are considered. It is shown that if the algebra contains 2D trivial subalgebra then vacuum Einstein field equations with cosmological constant can be reduced to a single, first-order differential equation. One of the cases is solved explicitly. Hence, a first example of an algebraically general para-K\"ahler Einstein space is given.
\newline
\newline
\textbf{Key words:} para-K\"ahler Einstein spaces, algebraically general exact solutions, Killing vectors.


\section{Introduction}
\setcounter{equation}{0}

The present paper is motivated by recent advances in understanding para-K\"ahler geometry and its role in mathematical physics \cite{Bor_Lamoneda_Nurowski, Bor_Makhmali_Nurowski, Nurowski_An}. At the beginning of the previous decade para-K\"ahler structures, as  structures equipped with a real metric of neutral signature $(++--)$ definitely were not in the research mainstream of theoretical physics. At best they were viewed as a territory located in the border area between physics and mathematics. Then An and Nurowski \cite{Nurowski_An} proved that it is not entirely accurate picture and para-K\"ahler Einstein structures play remarkable role in mathematical physics as they naturally appear in its noble core -- classical theoretical mechanics.

Let us recall the concept of \textsl{$(2,3,5)$-distribution} in dimension 5. Let $\mathcal{Q}$ be a 5D manifold and $\mathcal{D} \subset T \mathcal{Q}$ is a distribution such that $\textrm{rank} ( \mathcal{D}) = 2$. If we denote the n-th derived system of $\mathcal{D}$ by $\partial^{n} \mathcal{D}$ then the distribution $\mathcal{D}$ is a $(2,3,5)$-distribution if $\textrm{rank} (\partial \mathcal{D}) = 3$ and $\textrm{rank} (\partial^{2} \mathcal{D}) = 5$. The Lie algebra of infinitesimal symmetries of a $(2,3,5)$-distribution is at most 14 dimensional and in such a case it is isomorphic to the split real form of the simple exceptional Lie algebra $\mathfrak{g}_{2}$. Cartan in \cite{Cartan} solved the equivalence problem for $(2,3,5)$-distributions and found a set of local differential invariants for such distributions. One of these invariants, called \textsl{a Cartan quartic} plays an especially important role in the problem of equivalence and classification of $(2,3,5)$-distributions.

But $(2,3,5)$-distributions are interesting not only from a mathematical point of view. There are examples of physical systems which are 5 dimensional and which are naturally equipped with a $(2,3,5)$-distribution. One is remarkable: it is the configuration space of a system of two rigid bodies which roll on each other. Let two rigid bodies be bounded by two surfaces $\Sigma_{1}$ and $\Sigma_{2}$ equipped with metrics of Riemannian signature $(++)$, $g_{1}$ and  $g_{2}$, respectively. The configuration space $\mathcal{C} (\Sigma_{1}, \Sigma_{2})$ is a 5D circle bundle. It turns out that, if there is no slipping or twisting in the relative motion of rolling bodies, then $\mathcal{C} (\Sigma_{1}, \Sigma_{2})$ is equipped with rank 2 distribution which is called \textsl{the rolling distribution}. The problem: find pairs of surfaces $(\Sigma_{1}, g_{1})$ and $(\Sigma_{2}, g_{2})$ for which the symmetry of the rolling distribution is $\mathfrak{g}_{2}$ has been solved in the paper \cite{Nurowski_An}. It was proved later \cite{Bor_Lamoneda_Nurowski} that $(2,3,5)$-distributions with maximal algebra of infinitesimal symmetries can be obtained from the homogeneous para-K\"ahler Einstein metric which is called \textsl{the dancing metric}. 

The rolling problem has also its twistorial interpretation. Indeed, it is enough to define a 4D manifold $\mathcal{M} = \Sigma_{1} \times \Sigma_{2}$ and equip it with a neutral signature metric $g = g_{1} \oplus (-g_{2})$. Then \textsl{a circle twistor bundle} $\mathbb{T} (\Sigma_{1}, \Sigma_{2})$ is isomorphic to $\mathcal{C} (\Sigma_{1}, \Sigma_{2})$. This isomorphism defines a rank 2 distribution on the circle twistor bundle which is called \textsl{a twistor distribution}. 

However, a twistor circle bundle can be defined over an arbitrary 4D manifold (not necessarily $\Sigma_{1} \times \Sigma_{2}$) equipped with a neutral signature metric. Such a neutral structure naturally induces a twistor distribution on $\mathbb{T} (\mathcal{M})$ which is $(2,3,5)$ if the self-dual (SD) Weyl tensor of $\mathcal{M}$ does not vanish. An open question was whether the Cartan quartic of twistor distributions can be of an arbitrary Petrov-Penrose type? It turns out that the Cartan quartic of a twistor distribution and the anti-self-dual (ASD) Weyl tensor of an underlying para-K\"ahler Einstein structure has the same Petrov-Penrose type \cite{Bor_Makhmali_Nurowski}. Hence, the question can be reformulated in the following way: 
\begin{equation}
\nonumber
\textbf{Q}: \ \ \ \textrm{``Can a para-K\"ahler-Einstein metric be of an arbitrary Petrov-Penrose type?''}
\end{equation}

In the present paper we complete the task of answering this question. Let us sketch the geometric path we have chosen to deal with it.

\subsection{Role of the second Pleba\'nski formalism}

It is relatively easy to find a general form of a pKE-metric\footnote{In the paper we use the abbreviation \textsl{pKE-space} (\textsl{pKE-metric}) for \textsl{a para-K\"ahler Einstein space} (\textsl{a para-K\"ahler Einstein metric}). Also, \textsl{an algebraically general para-K\"ahler Einstein space} (\textsl{an algebraically general para-K\"ahler Einstein metric}) is abbreviated by \textsl{gpKE-space} (\textsl{gpKE-metric}).} \cite{Przanowski_Formanski_Chudecki, Bor_Makhmali_Nurowski}. Because pKE-spaces are equipped with two families of totally null planes, the construction of the metric is quite straightforward and it yields to the following, well-known result
\begin{equation}
\label{para-Kahler_pierwszy_formalizm_metryka}
\frac{1}{2} ds^{2} = V_{w \widetilde{w}} \, dw d \widetilde{w} + V_{w \widetilde{z}} \, dw d \widetilde{z} + V_{z \widetilde{w}} \, dz d \widetilde{w} + V_{z \widetilde{z}} \, dz d \widetilde{z} = e^{1} e^{2} + e^{3} e^{4}
\end{equation}
where $(w, \widetilde{w}, z, \widetilde{z} )$ are local real coordinates, $V = V (w, \widetilde{w}, z, \widetilde{z} )$ is a smooth function and $e^{i}$, $i=1,2,3,4$ are the members of a null tetrad
\begin{eqnarray}
\label{null_tetrad_first_formalism}
&& e^{1} = dw, \ e^{2} = V_{w \widetilde{w}} \, d \widetilde{w} + V_{w \widetilde{z}} \, d \widetilde{z}
\\ \nonumber
&& e^{3} = dz, \ e^{4} = V_{z \widetilde{w}} \, d \widetilde{w} + V_{z \widetilde{z}} \, d \widetilde{z}
\end{eqnarray}
Vacuum Einstein field equations with cosmological constant $\Lambda$ are reduced to the equation
\begin{equation}
\label{para-Kahler_pierwszy_formalizm_rownanie}
V_{w \widetilde{w}} V_{z \widetilde{z}} - V_{w \widetilde{z}} V_{z \widetilde{w}} = e^{- \Lambda V}
\end{equation}

The metric (\ref{para-Kahler_pierwszy_formalizm_metryka}) and Eq. (\ref{para-Kahler_pierwszy_formalizm_rownanie}) both remain valid, if one considers a complex generalization: a 4D complex manifold equipped with a holomorphic metric. In such a case the coordinates $(w, \widetilde{w}, z, \widetilde{z} )$ are complex and the function $V$ is holomorphic. Such spaces are  called \textsl{complex para-K\"ahler Einstein spaces} \cite{Przanowski_Formanski_Chudecki}.

The SD Weyl tensor of the metric (\ref{para-Kahler_pierwszy_formalizm_metryka}) is of type [D] if $\Lambda \ne0$ and it is of type [O] if $\Lambda = 0$. In the latter case we arrive at \textsl{the heavenly metric} and Eq. (\ref{para-Kahler_pierwszy_formalizm_rownanie}) becomes the famous \textsl{heavenly equation}. There is one more advantage of a complex approach: the simplicity of obtaining real slices of a complex metric (\ref{para-Kahler_pierwszy_formalizm_metryka}). If we treat the coordinates $(w, \widetilde{w}, z, \widetilde{z} )$ as real ones, we obtain a metric of neutral signature $(++--)$ while if we consider the coordinates as complex such that $\widetilde{w} = \bar{w}$ and $\widetilde{z} = \bar{z}$ where bar denotes a complex conjugation, we arrive at the Riemannian slice with the signature $(++++)$. 

It may seem that the metric (\ref{para-Kahler_pierwszy_formalizm_metryka}) and Eq. (\ref{para-Kahler_pierwszy_formalizm_rownanie}) are perfectly designed for a problem of finding exact solutions of pKE-spaces. After all, in this approach the null tetrad (\ref{null_tetrad_first_formalism}) is exactly adapted to the structure of null planes in the sense that the first family of such planes are given by $w=\textrm{const}$, $z=\textrm{const}$ and the second one by $\widetilde{w}=\textrm{const}$, $\widetilde{z}=\textrm{const}$. 

However one quickly runs into serious computational difficulties when following this path.
Indeed, formulas for the Ricci rotation coefficient and ASD Weyl curvature coefficients $\dot{C}^{(i)}$ are complicated (compare Section 4.1 of \cite{Chudecki_Examples} where these formulas are collected). It is unclear how to impose the condition of algebraic degeneracy of the ASD Weyl tensor, let alone the case of algebraically general ASD Weyl tensor. Fortunately, there is another way.

This way is called \textsl{the second Pleba\'nski formalism}. It allows to describe slightly more general spaces then these for which the metric has the form (\ref{para-Kahler_pierwszy_formalizm_metryka}). Remember, that complex pKE-spaces are equipped with two families of SD null planes (hence, the SD Weyl tensor must be of type [D] or [O]). If one assumes the existence of only one such family, then algebraic degeneracy is the only resulting restriction on SD Weyl tensor (this follows from the \textsl{complex Goldberg-Sachs theorem} \cite{Plebanski_Hacyan}). A space with such a property is called \textsl{the hyperheavenly space ($\mathcal{HH}$-space)}. Formulas for the ASD curvature coefficients are much more convenient (compare (\ref{wspolczynniki_ASD})). Fortunately, the $\mathcal{HH}$-space can be easily equipped with the second family of SD null planes what reduces the SD Weyl tensor to be of the type [D] or [O] and gives an alternative description of complex pKE-spaces. The metric has form (\ref{metryka_w_xyqp}) and vacuum Einstein field equations are reduced to Eq. (\ref{rownanie_hiperniebianskie}). Real neutral slices of the metric (\ref{metryka_w_xyqp}) are equally easily obtained as the neutral slices of the metric (\ref{para-Kahler_pierwszy_formalizm_metryka}). Treating coordinates $(q,p,x,y)$ as real ones and the function $\Theta = \Theta (q,p,x,y)$ as a smooth one, we get real pKE-space. Also, the SD null planes (which are complex surfaces in a complex case) become real ones. 

The second Pleba\'nski formalism allowed to find examples of all algebraically degenerate pKE-spaces \cite{Chudecki_Examples}. A significant generalization were introduced in \cite{Chudecki_Ref_1, Chudecki_Ref_2} where algebraically degenerate pKE-metrics with the ASD Weyl tensor of types [II], [III] and [N] were found in all the generality. (It should be mentioned, that  Nurowski had examples of all algebraically degenerate pKE-spaces already in 2015. These examples were later re-examined and published in \cite{Bor_Makhmali_Nurowski}). A special attention should be paid on pKE-spaces with the ASD Weyl tensor of the type [D] which have been found in all the generality in \cite{Bor_Makhmali_Nurowski}. Hence, all algebraically degenerate pKE-metrics are known and they are completely general.

Results of \cite{Bor_Makhmali_Nurowski, Chudecki_Ref_1, Chudecki_Ref_2} gave a positive answer to the question $\textbf{Q}$ in almost all cases, except one: an algebraically general pKE-space. In this paper we fill this gap.

It is known that algebraically general, vacuum solutions to Einstein field equations are quite problematic. Within the general theory of relativity there are only a few known solutions which belong to this class. It is \textsl{the Petrov solution} introduced in \cite{Petrov} which is the only vacuum solution of
Einstein equations admitting a simply-transitive four-dimensional maximal group of motions. Also \textsl{the Kasner metric} and \textsl{the Taub metric} can be algebraically general.

There is no reason to suspect that algebraically general neutral spaces are less complicated then their Lorentzian counterparts, even if the considerations are restricted to para-K\"ahler class. It quickly becomes apparent that our main goal - to find explicit example of gpKE-space - is an ambitious task, which can only be achieved by some methodical approach. Therefore, the other goal of our work emerges: an analysis of gpKE-spaces with infinitesimal symmetries. A systematic study brings us to the conclusion that manageable yet nondegenerate cases correspond to 3D algebras of Killing vectors which contain a trivial 2D subalgebra. We prove that for all gpKE-spaces equipped with such an algebra the vacuum field equations can be reduced to a first-order ODE known as \textsl{an Abel equation of the second kind}. Finally, as a realization of the main aim of our paper, we present the explicit metric (\ref{jedyna_jawna_metryka}).

\subsection{Structure of the paper}

The paper is organized as follows. In Section \ref{Sekcja_2} we recall the definition of pKE-spaces and $\mathcal{HH}$-spaces such that SD Weyl tensor is of type [D]. We discuss the Killing symmetries in $\mathcal{HH}$-spaces and we list criteria for the ASD Weyl tensor to be algebraically general. Section \ref{Section_3} is devoted to gpKE-spaces with symmetries. We prove that if one assumes the existence of two commuting Killing vectors, then they can be always brought to the form $\partial_{q}$ and $\partial_{p}$ without any loss of generality. Even with such a strong symmetry assumed, the $\mathcal{HH}$-equation is still complicated, compare Eq.\  (\ref{rownanie_hiperniebianskie_dwa_komutujace_Killingi}). Thus, we equip a space with one more Killing vector. Hence, we arrive at 3D algebras of infinitesimal symmetries. We also recall how to classify such algebras (we follow the work by Patera, Sharp and Winternitz \cite{Patera}). 

In Section \ref{Sekcja_4} we analyze 3D algebras of infinitesimal symmetries which contain $2A_{1}$ as an subalgebra. For all the cases we manage to reduce the $\mathcal{HH}$-equation to \textsl{an Abel equation of the second kind}\footnote{Let us remind that a first--order ODE is called \textsl{an Abel equation of the second kind} if it has the form $u u_{x}=  f (x) \, u^{2} + g (x) \, u + h (x)$ where $u=u(x)$ is an unknown function while $f(x)$, $g(x)$ and $h(x)$ are arbitrary functions. Abel equation of the second kind can be always brought to its \textsl{canonical form} which reads $\tilde{u} \tilde{u}_{\tilde{x}} - \tilde{u} =  \Psi (\tilde{x})$.}. Unfortunately, in all the cases, except one, the resulting equation is complicated and its general solution is not known. Finally, a special attention is paid to the case $A_{3,5}^{-\frac{1}{2}}$ (Section \ref{Sekcja_5}). Within this case we are able to find an explicit example. It is the metric (\ref{jedyna_jawna_metryka}). Some concluding remarks end the paper.

All considerations are local and they remain valid in a complex case (coordinates are complex and functions are holomorphic) as well as in a real neutral case (coordinates are real and functions are smooth). However, to analyze 3D algebras of infinitesimal symmetries we use the classification from \cite{Patera} which is valid for real Lie algebras. It is known that complex transformations of algebra generators can make certain types of algebras listed in \cite{Patera} isomorphic.


\renewcommand{\arraystretch}{1.5}
\setlength\arraycolsep{2pt}
\setcounter{equation}{0}

\section{Algebraically general para-K\"ahler Einstein spaces}
\label{Sekcja_2}
\setcounter{equation}{0}

In this section we recall crucial concepts concerning para-K\"ahler geometry and its relation to hyperheavenly spaces.

\subsection{Para-K\"ahler Einstein spaces}

A para-K\"ahler space is defined as follows
\begin{Definicja}
\label{Definicja_para_Kahler}
A para-K\"ahler space is a triple $(\mathcal{M}, ds^{2}, K)$ where $\mathcal{M}$ is a 4-dimensional real smooth manifold, $ds^{2}$ is a real smooth metric of the neutral signature $(++--)$ and $K: T \mathcal{M} \rightarrow T \mathcal{M}$ is an endomorphism such that
\begin{eqnarray}
\nonumber
(i) && K^{2} = id_{T \mathcal{M}}, \ \ \textrm{(K is paracomplex)}
\\ \nonumber
(ii) && \pm 1 \textrm{ eigenvalues of } K \textrm{ have rank 2}
\\ \nonumber
(iii) && ds^{2}(KX, KY) = - ds^{2}(X,Y) \ \textrm{for all } X,Y \in T \mathcal{M}, \ \ \textrm{(K is metric compatible)}
\\ \nonumber
(iv) && \textrm{the Nijenhuis tensor } N: T \mathcal{M} \otimes T \mathcal{M} \rightarrow T \mathcal{M},
\\ \nonumber
&& N(X,Y) := K[X, KY] + K[KX, Y] -[KX, KY] - [X,Y] 
\\ \nonumber
&& \textrm{ vanishes, i.e., } N(X,Y)=0 \textrm{ for all } X,Y \in T \mathcal{M}
\\ \nonumber
(v) && \textrm{a para-K\"ahler 2-form } \rho (X,Y) := ds^{2} (KX, Y) \textrm{ is closed, } d \rho =0.
\end{eqnarray}
\end{Definicja}

A few comments regarding Definition \ref{Definicja_para_Kahler} should be made. First, let us consider a pair of rank 2 distributions defined as follows
\begin{equation}
H := (K + id_{T \mathcal{M}}) \, T \mathcal{M}, \ \overline{H} := (K - id_{T \mathcal{M}}) \, T \mathcal{M}
\end{equation}
Hence, $T \mathcal{M} = H \oplus \overline{H}$. Both $H$ and $\overline{H}$ are null with respect to $ds^{2}$ and they are both SD or ASD. The condition $(iv)$ of the Definition \ref{Definicja_para_Kahler} assures, that both $H$ and $\overline{H}$ are integrable, i.e., $[H, H] \subset H$ and $[\overline{H}, \overline{H}] \subset \overline{H}$. Integral manifolds of $H$ and $\overline{H}$ are 2D, real, totally null surfaces which in what follows we call \textsl{null strings}. Also, by \textsl{a congruence (foliation) of null strings} we understand a family of null strings. Hence, para-K\"ahler spaces are equipped with two families of totally null surfaces (two congruences of null strings) of the same duality: one generated by the distribution $H$ and the other by the distribution $\overline{H}$. 

Moreover, the condition $(v)$ of the Definition \ref{Definicja_para_Kahler} is equivalent to the statement that both $H$ and $\overline{H}$ are parallely propagated. Spaces equipped with a null distribution which is parallely propagated are called \textsl{Walker spaces} \cite{Walker}. Hence, para-K\"ahler spaces are special examples of Walker spaces, and since they are equipped with a pair of such distributions they are sometimes called \textsl{double Walker spaces \cite{Chudecki_Ref_1}}. Following Pleba\'nski terminology, congruences of null strings which consist of integral manifolds of parallely propagated 2D distributions are called \textsl{nonexpanding}. Hence, we arrive at an alternative Definition of para-K\"ahler spaces
\begin{Definicja}
\label{Definicja_para_Kahler_alternative}
A para-K\"ahler space is a pair $(\mathcal{M}, ds^{2})$ where $\mathcal{M}$ is a 4-dimensional real smooth manifold, $ds^{2}$ is a real smooth metric of the neutral signature $(++--)$ and for each point $p \in \mathcal{M}$ there exists an open neighborhood $U$ of $p$ such that there exist two nonexpanding congruences of null strings of the same duality.
\end{Definicja}

It can be easily proved (see, e.g., \cite{Bor_Makhmali_Nurowski, Chudecki_notes_on_congr}) that if we choose an orientation in such a manner that both $H$ and $\overline{H}$ are SD, then Petrov-Penrose type of the SD Weyl tensor is [D], or - to be more precise (as in this Section we deal with 4D real neutral manifolds\footnote{Subtleties of Petrov-Penrose classification of the Weyl tensor related to differences between real neutral spaces and complex spaces are sketched in \cite{Bor_Makhmali_Nurowski,Chudecki_Ref_1}.}) - $[\textrm{D}_{r}]$. The ASD Weyl tensor is of an arbitrary Petrov-Penrose type. Thus, para-K\"ahler spaces are spaces of the types $[\textrm{D}_{r}]^{nn} \otimes [\textrm{any}]$, where double superscript $nn$ means that both congruences of SD null strings are nonexpanding. In what follows we are interested only in the case for which the ASD Weyl tensor is algebraically general, i.e., it is of the types $[\textrm{I}_{r}]$, $[\textrm{I}_{rc}]$ or $[\textrm{I}_{c}]$.  

At the end of this Section we remind the obvious: para-K\"ahler Einstein spaces are para-K\"ahler spaces which satisfy Einstein vacuum field equations with cosmological constant $\Lambda$
\begin{equation}
C_{ab} = 0, R = -4 \Lambda
\end{equation}
where $C_{ab}$ is the traceless Ricci tensor and $R$ is the curvature scalar.

\subsection{Hyperheavenly spaces of the types $[\textrm{D}]^{nn} \otimes [\textrm{any}]$}
The fact that pKE-space is a space of type $[\textrm{D}_{r}]^{nn} \otimes [\textrm{any}]$ satisfying Einstein vacuum field equations with $\Lambda$ is a bridge which leads to complex generalizations: \textsl{nonexpanding hyperheavenly spaces}. 

\subsubsection{Rudiments of hyperheavenly spaces}

\begin{Definicja}
\label{HH_space}
\textsl{A hyperheavenly space ($\mathcal{HH}$-space) with cosmological constant $\Lambda$} is a 4D complex analytic differential manifold endowed with a holomorphic metric $ds^{2}$ satisfying vacuum Einstein field equations with cosmological constant $\Lambda$ and such that the self-dual Weyl tensor is algebraically degenerate.
\end{Definicja}

The fact that an $\mathcal{HH}$-space is algebraically degenerate on SD side is equivalent to the existence of a congruence of SD null strings (by the virtue of \textsl{the complex Goldberg-Sachs theorem} \cite{Plebanski_Hacyan}). In what follows we are interested in type-$[\textrm{D}]^{nn} \otimes [\textrm{any}]$ $\mathcal{HH}$-spaces. The procedure how to specialize a type-$[\textrm{deg}]^{n} \otimes [\textrm{any}]$ $\mathcal{HH}$-space to a type-$[\textrm{D}]^{nn} \otimes [\textrm{any}]$ $\mathcal{HH}$-space is straightforward and we skip relevant details (they can be found in \cite{Finley_intrinsic,Chudecki_klasyfikacja_Killingow_nieeks}). 

The metric of an $\mathcal{HH}$-space of a type $[\textrm{D}]^{nn} \otimes [\textrm{any}]$ can be always brought to the form 
\begin{equation}
\label{metryka_HH_nieekspandujaca}
\frac{1}{2} ds^{2} = - dp^{\dot{A}}dq_{\dot{A}} + \left( - \Theta_{p_{\dot{A}}p_{\dot{B}}}  + \frac{\Lambda}{3} p^{\dot{A}} p^{\dot{B}} \right) dq_{\dot{A}} dq_{\dot{B}} = e^{1}e^{2} + e^{3}e^{4}, \ \ \ \dot{A}, \dot{B} = \dot{1}, \dot{2}
\end{equation}
where $(e^{1},e^{2},e^{3},e^{4})$ is a null tetrad called \textsl{the Pleba\'nski tetrad}
\begin{equation}
\label{tetrada_Plebanskiego__HH_nieekspandujaca}
[e^{3}, e^{1}] = dq_{\dot{A}} , \ [e^{4}, e^{2}] = -dp^{\dot{A}} + \left( - \Theta_{p_{\dot{A}}p_{\dot{B}}}  + \frac{\Lambda}{3} p^{\dot{A}} p^{\dot{B}} \right) dq_{\dot{B}}
\end{equation}
$(q^{\dot{A}}, p^{\dot{B}})$ are local coordinates, $\Theta = \Theta (q^{\dot{A}}, p^{\dot{B}})$ is a holomorphic function called \textsl{the key function} and $\Lambda$ is a cosmological constant. In Pleba\'nski tetrad SD conformal curvature coefficients read
\begin{equation}
\label{krzywizna_nieekspandujaca}
C^{(1)} = C^{(2)} = C^{(4)} = C^{(5)}=0,  \ C^{(3)} = - \frac{2}{3} \Lambda 
\end{equation}
In what follows we assume that $\Lambda \ne 0$ holds true, otherwise $C^{(3)}=0$ and a space becomes a heavenly space. ASD curvature takes the nice form
\begin{equation}
\label{krzywizna_nieekspandujaca_ASD}
 C_{\dot{A}\dot{B}\dot{C}\dot{D}} = \Theta_{p^{\dot{A}}p^{\dot{B}}p^{\dot{C}}p^{\dot{D}}}
\end{equation}
Vacuum Einstein field equations with $\Lambda$ are reduced to the nonlinear, second order PDE called \textsl{the (nonexpanding) hyperheavenly equation}
\begin{equation}
\label{rownanie_HH_nieekspadujace}
\frac{1}{2} \Theta_{p_{\dot{A}}p_{\dot{B}}}\Theta_{p^{\dot{A}}p^{\dot{B}}} + \Theta_{p_{\dot{A}} q^{\dot{A}}} + \Lambda \left( p^{\dot{A}} \Theta_{p^{\dot{A}}} - \Theta - \frac{1}{3} p^{\dot{A}} p^{\dot{B}} \Theta_{p^{\dot{A}} p^{\dot{B}}} \right) =0
\end{equation}

It is sometimes useful to abandon the coordinates $(q^{\dot{A}}, p^{\dot{B}})$ written in a spinor-like convention and define the following abbreviations
\begin{equation}
q:= q^{\dot{1}}, \ p:= q^{\dot{2}}, \ x:=p^{\dot{1}}, \ y:=p^{\dot{2}}
\end{equation}
We refer to the coordinates $(q,p,x,y)$ as \textsl{the hyperheavenly coordinates}. The metric (\ref{metryka_HH_nieekspandujaca}) written in the hyperheavenly coordinates takes the form
\begin{eqnarray}
\label{metryka_w_xyqp}
\frac{1}{2} ds^{2} &=& dydq-dxdp  - \left( \Theta_{xx} - \frac{\Lambda}{3}y^{2} \right) dq^{2} - 2\left( \Theta_{xy} + \frac{\Lambda}{3}xy \right) dqdp
\\ \nonumber
&& \ \ \ \ \ \ \ \ \ \ \ \ \ \ \ \ \ \ 
- \left( \Theta_{yy} - \frac{\Lambda}{3}x^{2} \right) dp^{2} 
\end{eqnarray}
The hyperheavenly equation reads
\begin{eqnarray}
\label{rownanie_hiperniebianskie}
&& \Theta_{xx} \Theta_{yy} - \Theta_{xy}^{2}+ \Theta_{qy} - \Theta_{px}  
\\ 
\nonumber
&&   \ \ \ \ \ \ \ \ 
+ \Lambda \bigg( x \Theta_{x} + y \Theta_{y} - \Theta - \frac{1}{3} x^{2} \Theta_{xx} - \frac{1}{3} y^{2} \Theta_{yy} - \frac{2}{3} xy \Theta_{xy}   \bigg) = 0
\end{eqnarray}
Finally, the ASD curvature coefficients are easily expressible as the fourth partial derivatives of the function $\Theta$
\begin{eqnarray}
\label{wspolczynniki_ASD}
&&  \dot{C}^{(5)} := 2 C_{\dot{1} \dot{1} \dot{1} \dot{1}} = 2\Theta_{xxxx}
\\ \nonumber
&&  \dot{C}^{(4)} := 2 C_{\dot{1} \dot{1} \dot{1} \dot{2}} = 2\Theta_{xxxy}
\\ \nonumber
&&  \dot{C}^{(3)} := 2 C_{\dot{1} \dot{1} \dot{2} \dot{2}} = 2\Theta_{xxyy}
\\ \nonumber
&&  \dot{C}^{(2)} := 2 C_{\dot{1} \dot{2} \dot{2} \dot{2}} = 2\Theta_{xyyy}
\\ \nonumber
&&  \dot{C}^{(1)} := 2 C_{\dot{2} \dot{2} \dot{2} \dot{2}} = 2\Theta_{yyyy}
\end{eqnarray}

\subsubsection{Gauge freedom}

The metric (\ref{metryka_HH_nieekspandujaca}) remains invariant under the following transformations of the variables 
\begin{equation}
\label{gauge_wspolrzednych}
 q'_{\dot{A}} = q'_{\dot{A}} (q_{\dot{B}}), \ 
p'^{\dot{A}} = \Delta^{-1} \frac{\partial q'^{\dot{A}}}{\partial q^{\dot{B}}} \left( p^{\dot{B}}  -\frac{1}{\Lambda} \frac{\partial \ln \Delta}{\partial q_{\dot{B}}} \right) , \ \Delta := \det \left(  \frac{\partial q'_{\dot{A}}}{\partial q_{\dot{B}}} \right)
\end{equation}
Under (\ref{gauge_wspolrzednych}) the key function transforms as follows
\begin{equation}
\label{transformacja_funkcji_kluczowej}
 \Delta^{2} \Theta' = \Theta - \frac{1}{6} L_{\dot{A}\dot{B}\dot{C}} \, p^{\dot{A}} p^{\dot{B}} p^{\dot{C}} + \frac{1}{2}N_{\dot{A}\dot{B}} \, p^{\dot{A}} p^{\dot{B}} -H_{\dot{A}} \, p^{\dot{A}} + \frac{1}{\Lambda} \left( \frac{\partial H^{\dot{A}}}{\partial q^{\dot{A}}} - \frac{1}{2} N_{\dot{A}\dot{B}} N^{\dot{A}\dot{B}} \right)
 \end{equation}
where $H_{\dot{A}} = H_{\dot{A}} (q^{\dot{B}})$ are arbitrary functions and
\begin{eqnarray}
&& L_{\dot{A}\dot{B}\dot{C}} = L_{(\dot{A}\dot{B}\dot{C})} := \Delta^{-1} \frac{\partial q'_{\dot{X}}}{\partial q^{(\dot{A}}} \frac{\partial^{2} q'^{\dot{X}}}{\partial q^{\dot{B}} \partial q^{\dot{C})}}
\\ \nonumber
&& N_{\dot{A}\dot{B}} = N_{(\dot{A}\dot{B})} := \frac{1}{3 \Lambda} \frac{\partial \ln \Delta}{\partial q^{\dot{A}}}\frac{\partial \ln \Delta}{\partial q^{\dot{B}}} + \frac{1}{\Lambda} \frac{\partial^{2} \ln \Delta}{\partial q^{\dot{A}} \partial q^{\dot{B}}} - \frac{1}{\Lambda}  \frac{\partial \ln \Delta}{\partial q'_{\dot{X}}} \frac{\partial^{2} q'_{\dot{X}}}{\partial q^{\dot{A}} \partial q^{\dot{B}}} 
\end{eqnarray}
Hence, we have four arbitrary gauge functions of two variables, $q'^{\dot{A}} = q'^{\dot{A}} (q^{\dot{B}})$ and $H_{\dot{A}} = H_{\dot{A}} (q^{\dot{B}})$ at our disposal.

\subsubsection{Symmetries}

Symmetries in $\mathcal{HH}$-spaces of the types $[\textrm{D}]^{nn} \otimes [\textrm{any}]$ generated by vectors $K_{a}$ such that the set of equations $\nabla_{(a} K_{b)}=\chi \, g_{ab}$ is satisfied were considered in \cite{Plebanski_Finley_Killing} and then revisited in \cite{Chudecki_klasyfikacja_Killingow_nieeks}. It was proved that any Killing vector admitted by a space of the types $[\textrm{D}]^{nn} \otimes [\textrm{any}]$ has the form
\begin{equation}
\label{ogolna_postac_wektora_Killinga}
K = \delta^{\dot{B}} \frac{\partial}{\partial q^{\dot{B}}} + \left( \frac{\partial \delta^{\dot{M}}}{\partial q_{\dot{B}}} p_{\dot{M}} + \epsilon^{\dot{B}} \right) \frac{\partial}{\partial p^{\dot{B}}} , \ \epsilon^{\dot{A}} := -\frac{1}{\Lambda} \frac{\partial}{\partial q_{\dot{A}}} \left( \frac{\partial \delta^{\dot{B}}}{\partial q^{\dot{B}}}  \right)
\end{equation}
where $\delta^{\dot{A}} = \delta^{\dot{A}} (q^{\dot{B}})$. Killing equations are reduced to a single \textsl{master equation}
\begin{equation}
\label{master_equation}
K (\Theta) = -2  \frac{\partial \delta^{\dot{N}}}{\partial q^{\dot{N}}} \, \Theta   +  \frac{1}{6} \frac{\partial^{2} \delta_{\dot{A}}}{\partial q^{\dot{B}} \partial q^{\dot{C}}} \, p^{\dot{A}}p^{\dot{B}}p^{\dot{C}} 
+ \frac{1}{2} \frac{\partial \epsilon_{\dot{A}}}{\partial q^{\dot{B}}} \, p^{\dot{A}}p^{\dot{B}} + \zeta_{\dot{A}} \, p^{\dot{A}} -\frac{1}{\Lambda} \frac{\partial \zeta^{\dot{A}}}{\partial q^{\dot{A}}}
\end{equation}
where $\zeta^{\dot{A}}$ are arbitrary functions of $q^{\dot{B}}$. 
\begin{Uwaga}
\label{Uwaga_Killingi_1}
\normalfont
Let us remind that a vector $K_{a}$ which satisfies a set of equations $\nabla_{(a} K_{b)}=\chi \, g_{ab}$ is called \textsl{a proper conformal vector} if $\chi$ is not a constant, \textsl{a proper homothetic vector} if $\chi = \textrm{const} \ne 0$ and \textsl{a Killing vector} if $\chi=0$. It was proved in \cite{Chudecki_klasyfikacja_Killingow_nieeks} that in $\mathcal{HH}$-spaces of the types $[\textrm{D}]^{nn} \otimes [\textrm{any}]$ the factor $\chi$ must vanish, $\chi = 0$. Thus, only Killing vectors are admitted.
\end{Uwaga}
\begin{Uwaga}
\label{Uwaga_Killingi_2}
\normalfont
It can be checked by a simple calculation that any null Killing vector with $\delta^{\dot{A}}=0$ must vanish, while the postulate of a null Killing vector with $\delta^{\dot{A}} \ne 0$ implies $\Lambda = 0$. Thus, null Killing vectors are not admitted in our setting.  
\end{Uwaga}
From Remarks \ref{Uwaga_Killingi_1} and \ref{Uwaga_Killingi_2} it follows that any algebra of infinitesimal symmetries admitted by type-$[\textrm{D}]^{nn} \otimes [\textrm{any}]$ $\mathcal{HH}$-spaces consists of nonnull Killing vectors only.

Under (\ref{gauge_wspolrzednych}) the functions $\delta^{\dot{A}}$ and $\zeta^{\dot{A}}$ transform as follows
\begin{subequations}
\begin{eqnarray} 
\label{transformacja_delty}
\delta'^{\dot{M}} &=& \frac{\partial q'^{\dot{M}}}{\partial q^{\dot{B}}} \delta^{\dot{B}}
\\
\label{transformacja_zeta}
\Delta^{2} \zeta'_{\dot{R}} &=&  \frac{\partial q'_{\dot{R}}}{\partial q_{\dot{N}}} \left( \zeta_{\dot{N}} - \delta^{\dot{M}} \frac{\partial H_{\dot{N}}}{\partial q^{\dot{M}}} + H_{\dot{M}} \frac{\partial \delta_{\dot{N}}}{\partial q_{\dot{M}}} +\epsilon^{\dot{M}} N_{\dot{M}\dot{N}} -2 H_{\dot{N}} \frac{\partial \delta^{\dot{M}}}{\partial q^{\dot{M}}}  \right) \ \ \ \ \ \ 
\\ \nonumber
&& -\frac{\Delta^{2}}{\Lambda^{2}} \frac{\partial \ln \Delta}{\partial q'_{\dot{S}}} \frac{\partial \ln \Delta}{\partial q'_{\dot{T}}} \frac{\partial^{2} \delta'_{(\dot{R}}}{\partial q'^{\dot{S}} \partial q'^{\dot{T})}}  +  \frac{\Delta^{2}}{\Lambda} \frac{\partial \ln \Delta}{\partial q'_{\dot{M}}}   \frac{\partial \epsilon'_{(\dot{M}}}{\partial q'^{\dot{R})}} 
\end{eqnarray}
\end{subequations}

Let us write a Killing vector in the hyperheavenly coordinates
\begin{eqnarray}
K &=& \delta^{\dot{1}} \, \frac{\partial}{\partial q} +  \delta^{\dot{2}} \, \frac{\partial}{\partial p} 
+ \left( \frac{\partial \delta^{\dot{1}}}{\partial p} y - \frac{\partial \delta^{\dot{2}}}{\partial p} x  + \epsilon^{\dot{1}}\right)  \frac{\partial}{\partial x}
+ \left( -\frac{\partial \delta^{\dot{1}}}{\partial q} y + \frac{\partial \delta^{\dot{2}}}{\partial q} x + \epsilon^{\dot{2}} \right)  \frac{\partial}{\partial y} \ \ \ \ \ \ 
\\ \nonumber
&& \epsilon^{\dot{1}} = - \frac{1}{\Lambda} \frac{\partial \delta}{\partial p}, \, \epsilon^{\dot{2}} =  \frac{1}{\Lambda} \frac{\partial \delta}{\partial q}, \, \delta := \frac{\partial \delta^{\dot{M}}}{\partial q^{\dot{M}}}
\end{eqnarray}
The master equation reads then
\begin{eqnarray}
K(\Theta) &=& -2 \delta \Theta + \frac{1}{6} \frac{\partial^{2} \delta^{\dot{2}}}{\partial q^{2}}x^{3} -
\frac{1}{6} \frac{\partial^{2} \delta^{\dot{1}}}{\partial p^{2}}y^{3} +
\frac{1}{6} \left( 2 \frac{\partial^{2} \delta^{\dot{2}}}{\partial q \partial p} - \frac{\partial^{2} \delta^{\dot{1}}}{\partial q^{2}}  \right)  x^{2}y
\\ \nonumber
&& +\frac{1}{6} \left( \frac{\partial^{2} \delta^{\dot{2}}}{\partial p^{2}} - 2 \frac{\partial^{2} \delta^{\dot{1}}}{\partial q \partial p}   \right)  xy^{2} + \frac{1}{2\Lambda} \frac{\partial^{2} \delta}{\partial q^{2}}x^{2}+ \frac{1}{\Lambda} \frac{\partial^{2} \delta}{\partial q \partial p} xy + \frac{1}{2\Lambda} \frac{\partial^{2} \delta}{\partial p^{2}}y^{2}
\\ \nonumber
&& + \zeta_{\dot{1}} x + \zeta_{\dot{2}} y - \frac{1}{\Lambda} \left(  \frac{\partial \zeta^{\dot{1}}}{\partial q} + \frac{\partial \zeta^{\dot{2}}}{\partial p}  \right)
\end{eqnarray}

\subsection{Criteria for the ASD Weyl tensor to be algebraically general}

Let $\xi^{\dot{A}}$ be an arbitrary spinor such that $\xi^{\dot{2}} \ne 0$. If we define $\xi: = \xi^{\dot{1}} / \xi^{\dot{2}}$ then $C_{\dot{A}\dot{B}\dot{C}\dot{D}} \xi^{\dot{A}} \xi^{\dot{B}} \xi^{\dot{C}} \xi^{\dot{D}}$ becomes a fourth-order polynomial in $\xi$. Indeed
\begin{eqnarray}
\mathcal{C}(\xi) &:=& C_{\dot{A}\dot{B}\dot{C}\dot{D}} \xi^{\dot{A}} \xi^{\dot{B}} \xi^{\dot{C}} \xi^{\dot{D}}
\\ \nonumber
 &=& (\xi^{\dot{2}})^{4} ( \Theta_{xxxx} \, \xi^{4} + 4 \Theta_{xxxy} \, \xi^{3} + 6 \Theta_{xxyy} \, \xi^{2} + 4 \Theta_{xyyy} \, \xi + \Theta_{yyyy} )
 \\ \nonumber
  &=& \frac{1}{2}(\xi^{\dot{2}})^{4} ( \dot{C}^{(5)} \xi^{4} + 4 \dot{C}^{(4)} \xi^{3} + 6 \dot{C}^{(3)} \xi^{2} + 4 \dot{C}^{(2)} \xi + \dot{C}^{(1)} )
\end{eqnarray}
The Petrov-Penrose classification is related to the multiplicity of roots of the polynomial $\mathcal{C}(\xi)$. To be more precise, the ASD conformal curvature is algebraically general if and only if $\mathcal{C}(\xi)$ has four different roots. \textsl{A discriminant} $\mathcal{D}$ is defined as 
\begin{equation}
\mathcal{D} := I^{3} - 27 J^{2}
\end{equation}
where 
\begin{eqnarray}
I &:=& \dot{C}^{(1)} \dot{C}^{(5)} - 4 \dot{C}^{(2)} \dot{C}^{(4)} + 3 \dot{C}^{(3)} \dot{C}^{(3)}
\\ \nonumber
J &:=& \left| \begin{array}{ccc}
\dot{C}^{(5)} & \dot{C}^{(4)} & \dot{C}^{(3)} \\
\dot{C}^{(4)} & \dot{C}^{(3)} & \dot{C}^{(2)} \\
\dot{C}^{(3)} & \dot{C}^{(2)} & \dot{C}^{(1)} 
\end{array}\right| 
\\ \nonumber
&=&\dot{C}^{(5)} (\dot{C}^{(3)}\dot{C}^{(1)} - \dot{C}^{(2)}\dot{C}^{(2)}) - \dot{C}^{(4)} (\dot{C}^{(4)}\dot{C}^{(1)}-\dot{C}^{(2)}\dot{C}^{(3)}) + \dot{C}^{(3)} (\dot{C}^{(4)} \dot{C}^{(2)}-\dot{C}^{(3)}\dot{C}^{(3)})
\end{eqnarray}
In a complex case the ASD Weyl tensor is type-[I] if and only if $\mathcal{D} \ne 0$. In a real neutral case $\mathcal{D}$ also must be nonzero but there are three different algebraically general types, namely $[\textrm{I}_{r}]$ (all roots are real), $[\textrm{I}_{c}]$ (all roots are complex) and $[\textrm{I}_{rc}]$ (two real roots and a pair of mutually conjugated complex roots). To distinguish these types we define quantities 
\begin{eqnarray}
\label{definicje_P_i_R}
\mathcal{P} &:=& 48 ( \dot{C}^{(3)}\dot{C}^{(5)} - \dot{C}^{(4)} \dot{C}^{(4)})
\\ \nonumber
\mathcal{R} &:=& 64 \big[ \dot{C}^{(1)} (\dot{C}^{(5)})^{3} - 9(\dot{C}^{(3)})^{2} (\dot{C}^{(5)})^{2} + 24 \dot{C}^{(3)}\dot{C}^{(5)} (\dot{C}^{(4)})^{2} 
\\ \nonumber
&& \ \ \ \ \ \ \ -4 \dot{C}^{(2)}\dot{C}^{(4)} (\dot{C}^{(5)})^{2} - 12 (\dot{C}^{(4)})^{4}  \big]
\end{eqnarray}
which enter into the criteria gathered in Table \ref{kryteria_niezdegenerowania}.
\begin{table}[ht]
\begin{center}
\begin{tabular}{|c|c|c|c|}   \hline
\multicolumn{2}{|c|}{Complex case }  & \multicolumn{2}{|c|}{Real neutral case } \\ \hline
Type & Criterion & Type &  Criteria  \\  \hline
 &  & $[\textrm{I}_{rc}]$ &  $\mathcal{D} < 0$ \\  \cline{3-4}
$[\textrm{I}] $ & $\mathcal{D} \ne 0$ & $[\textrm{I}_{r}]$ & $\mathcal{D} > 0$; $\mathcal{P}<0$ and $\mathcal{R}<0$  \\  \cline{3-4}
 & & $[\textrm{I}_{c}]$ & $\mathcal{D} > 0$; $\mathcal{P}>0$ or $\mathcal{R}>0$ \\ \hline
\end{tabular}
\caption{Criteria for determining type of an algebraically general ASD Weyl tensor.}
\label{kryteria_niezdegenerowania}
\end{center}
\end{table}

\subsection{An adaptation of a null tetrad}

It is well-known that there exists a canonical adaptation of a null tetrad in the case when ASD Weyl tensor is algebraically general (see \cite{Plebanski_spinors}). Indeed, the ASD Weyl spinor can be decomposed into a product of 1-index dotted spinors which are called \textsl{dotted Penrose spinors}
\begin{equation}
C_{\dot{A}\dot{B}\dot{C}\dot{D}} = \alpha_{(\dot{A}} \beta_{\dot{B}} \gamma_{\dot{C}} \delta_{\dot{D})}
\end{equation}
Penrose spinors $\alpha_{\dot{A}}$, $\beta_{\dot{A}}$, $\gamma_{\dot{A}}$ and $\delta_{\dot{A}}$ are pairwise linearly independent. Let a basis of dotted spinors be given by a pair of spinors $(a_{\dot{A}}, b_{\dot{B}})$, $a_{\dot{A}} b^{\dot{B}} \ne 0$. Spinors $a_{\dot{A}}$ and $b_{\dot{A}}$ can be chosen arbitrarily, so one can choose $a_{\dot{A}}$ to be proportional to $\alpha_{\dot{A}}$ and $b_{\dot{A}}$ to be proportional to $\beta_{\dot{A}}$. This choice implies $\dot{C}^{(1)}=0=\dot{C}^{(5)}$. 

However, the canonical choice $\dot{C}^{(1)}=0=\dot{C}^{(5)}$ is possible only if a null tetrad is not fixed and null rotations are not restricted. Unfortunately, Pleba\'nski tetrad is partially fixed. Indeed, gauge (\ref{gauge_wspolrzednych}) induces the following admissible transformations of dotted 1-index spinors
\begin{equation}
\label{dotowane_transformacje_spinorowe}
\alpha'^{\dot{A}} = \Delta^{-\frac{1}{2}} \frac{\partial q'^{\dot{A}}}{\partial q^{\dot{B}}} \, \alpha^{\dot{B}}
\end{equation}
The choice $2\Theta_{yyyy} = \dot{C}^{(1)}=0$ requires $\alpha_{\dot{2}} \equiv -\alpha^{\dot{1}} =0$, but from (\ref{dotowane_transformacje_spinorowe}) we find
\begin{equation}
\alpha'^{\dot{1}}  = \Delta^{-\frac{1}{2}} \left( \frac{\partial q'}{\partial q} \alpha^{1} + \frac{\partial q'}{\partial p} \alpha^{2} \right) 
\end{equation}
Now $\alpha^{\dot{1}}$ is a function of 4 variables, $\alpha^{\dot{1}}=\alpha^{\dot{1}} (q,p,x,y)$, while $q'$ depends on $q$ and $p$ only.  Thus, the condition $\alpha^{\dot{1}}=0$ cannot be, in general, obtained by means of the transformation rule (\ref{dotowane_transformacje_spinorowe}).  Nevertheless, it can be considered as an additional \textsl{ad hoc} assumption which (possibly) could lead to an example of algebraically general solution. Detailed analysis proves that it is not a case, since we have
\begin{Lemat}
\label{lemat_o_nieznikaniu_C1}
Any solution of the equation (\ref{rownanie_hiperniebianskie}) such that $\Theta_{yyyy}=0$ leads to  $\mathcal{D}=0$.
\end{Lemat}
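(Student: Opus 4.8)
The plan is to convert the hypothesis $\Theta_{yyyy}=0$ into an explicit ansatz, substitute it into the hyperheavenly equation (\ref{rownanie_hiperniebianskie}), read off the resulting hierarchy of constraints on the coefficient functions, and then show that these constraints force the discriminant $\mathcal{D}$ to vanish.

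Since $\Theta_{yyyy}=0$ makes $\Theta$ at most cubic in $y$, I would first write $\Theta = a_{3}(q,p,x)\,y^{3}+a_{2}(q,p,x)\,y^{2}+a_{1}(q,p,x)\,y+a_{0}(q,p,x)$ and plug this into (\ref{rownanie_hiperniebianskie}). The left-hand side becomes a polynomial in $y$ of degree four, so equating its five coefficients to zero gives five PDEs for $a_{0},\dots,a_{3}$ in the variables $q,p,x$. Only the $y^{4}$-coefficient involves $a_{3}$ by itself, and in it all $\Lambda$-terms cancel, leaving $2\,a_{3}\,a_{3,xx}=3\,a_{3,x}^{2}$. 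Integrating this in $x$ yields the dichotomy: either $a_{3,x}\equiv 0$, or $a_{3}=(\alpha(q,p)\,x+\beta(q,p))^{-2}$ with $\alpha\neq 0$.

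I would then use (\ref{wspolczynniki_ASD}): $\dot{C}^{(1)}=0$ by hypothesis, $\dot{C}^{(2)}=12\,a_{3,x}$ is independent of $y$, and $\dot{C}^{(3)},\dot{C}^{(4)},\dot{C}^{(5)}$ are polynomials in $y$ of degrees $1,2,3$ whose coefficients are $x$-derivatives of $a_{3},a_{2},a_{1},a_{0}$. With $\dot{C}^{(1)}=0$ the Weyl quartic acquires the root $\xi=0$, and $\mathcal{D}$ reduces to $I^{3}-27J^{2}$ with $I=3(\dot{C}^{(3)})^{2}-4\dot{C}^{(2)}\dot{C}^{(4)}$ and $J=-(\dot{C}^{(2)})^{2}\dot{C}^{(5)}+2\dot{C}^{(2)}\dot{C}^{(3)}\dot{C}^{(4)}-(\dot{C}^{(3)})^{3}$; equivalently, $\mathcal{D}=0$ holds exactly when $\dot{C}^{(2)}=0$, or when the complementary cubic $\dot{C}^{(5)}\xi^{3}+4\dot{C}^{(4)}\xi^{2}+6\dot{C}^{(3)}\xi+4\dot{C}^{(2)}$ has a repeated root. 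In the branch $a_{3,x}\equiv 0$ one has $\dot{C}^{(2)}=0$, hence $\xi=0$ is a double root of the quartic and $\mathcal{D}=0$ follows immediately.

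The generic branch is the crux. With $a_{3}=(\alpha x+\beta)^{-2}$ and $\sigma:=\alpha x+\beta$, a short computation shows that the part of the Weyl quartic built from $a_{3}$ alone is proportional to $\xi\,(\alpha y\,\xi-\sigma)^{2}\,(5\alpha y\,\xi-2\sigma)$, so it already carries a squared linear factor, while every remaining contribution (from $a_{2},a_{1},a_{0}$) is divisible by $\xi^{2}$. The whole question is therefore whether this double root survives; I would settle it by substituting the $\dot{C}^{(i)}$ into the discriminant $\mathcal{D}$ and reducing it systematically — order by order in $y$ — modulo the four lower coefficient equations of the hierarchy, which tie the relevant $x$-derivatives of $a_{2},a_{1},a_{0}$ to $a_{3}$-data, until $\mathcal{D}$ collapses identically. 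I expect this last reduction to be the main obstacle: it is a sizeable polynomial identity and is probably best organised with the help of a computer algebra system rather than carried out by hand. Since $\{a_{3,x}=0\}$ and $\{a_{3,x}\neq 0\}$ together cover the manifold, combining the two branches yields $\mathcal{D}=0$ for every solution of (\ref{rownanie_hiperniebianskie}) with $\Theta_{yyyy}=0$.
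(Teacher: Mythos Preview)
Your setup coincides with the paper's: write $\Theta$ as a cubic in $y$, substitute into (\ref{rownanie_hiperniebianskie}), and split according to the leading equation $2a_{3}a_{3,xx}=3a_{3,x}^{2}$. The branch $a_{3,x}\equiv 0$ is handled correctly, since $\dot C^{(1)}=\dot C^{(2)}=0$ forces $\mathcal{D}=0$.

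The gap is in the generic branch. The paper's route there is to \emph{solve} the remaining hierarchy: with $a_{3}=(\alpha x+\beta)^{-2}$ fixed, the lower coefficient equations can be integrated in $x$ one after another (branching further into subcases, each of which closes), and for every resulting explicit $\Theta$ one simply checks $\mathcal{D}=0$. You instead propose to show $\mathcal{D}\equiv 0$ as a formal consequence of the constraints without integrating them. The difficulty is that these constraints are \emph{differential}, not algebraic: the $y^{3}$-equation, for example, is a second-order linear ODE in $x$ for $a_{2}$ and also contains $a_{3,px}$, while $\mathcal{D}$ depends on fourth-order $x$-derivatives of all the $a_{i}$. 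Reducing $\mathcal{D}$ ``modulo the four lower equations'' therefore requires adjoining their differential consequences, and there is no a priori reason this differential ideal catches $\mathcal{D}$ short of effectively integrating the system. Your factorization of the $a_{3}$-part of the Weyl quartic as $\xi(\alpha y\,\xi-\sigma)^{2}(5\alpha y\,\xi-2\sigma)$ is correct and suggestive, but it does not help on its own: a perturbation of the form $\xi^{2}\cdot(\text{anything})$ generically destroys a double root located away from $\xi=0$, so the squared factor does not survive for free. The reliable path is the paper's---integrate the hierarchy explicitly in each subcase and verify $\mathcal{D}=0$ directly.
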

\begin{proof}[Sketch of the Proof.] The condition $\Theta_{yyyy}=0$ implies that the key function $\Theta$ becomes a third order polynomial in $y$ with coefficients depending on $(q,p,x)$. Inserting this form of $\Theta$ into the $\mathcal{HH}$-equation produces a system of four equations. Then the system splits into several subcases. All of these subcases can be solved but corresponding $\Theta$ always gives $\mathcal{D}=0$. 
\end{proof}
The Lemma \ref{lemat_o_nieznikaniu_C1} yields
\begin{Wniosek}
\label{wniosek_o_niezerowosci}
Any solution of the equation (\ref{rownanie_hiperniebianskie}) such that $\Theta_{xxxx}=0$ leads to $\mathcal{D}=0$.
\end{Wniosek}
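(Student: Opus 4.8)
The plan is to deduce the Corollary from Lemma~\ref{lemat_o_nieznikaniu_C1} by exploiting a discrete symmetry of the hyperheavenly equation~(\ref{rownanie_hiperniebianskie}) that interchanges the coordinates $x$ and $y$. The heuristic is this: by~(\ref{wspolczynniki_ASD}) the hypothesis $\Theta_{xxxx}=0$ is precisely $\dot{C}^{(5)}=0$, i.e.\ the statement that the root ``$\xi=\infty$'' of the quartic $\mathcal{C}(\xi)$ (the spinor direction $(1,0)$) is a principal null direction of the ASD Weyl tensor, while $\Theta_{yyyy}=0$ is $\dot{C}^{(1)}=0$, the statement that the root $\xi=0$ (the direction $(0,1)$) is a principal null direction. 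These two conditions are swapped by the linear involution of the dotted spinor index that interchanges its two components, and such an involution sits inside the gauge group~(\ref{gauge_wspolrzednych}); since the vanishing of the discriminant $\mathcal{D}$ is a property of the ASD Weyl tensor, not of the chart, it should survive the swap.

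To make this rigorous I would first fix the explicit substitution: the constant, unit-Jacobian member of~(\ref{gauge_wspolrzednych}) which in hyperheavenly coordinates reads $(q,p,x,y)\longmapsto(p,-q,y,-x)$. Because its Jacobian is constant with $\Delta\equiv1$, the cubic and quadratic terms $L_{\dot A\dot B\dot C}$, $N_{\dot A\dot B}$ in the transformation law~(\ref{transformacja_funkcji_kluczowej}) drop out and, choosing $H_{\dot A}=0$, the key function is merely re-expressed in the new variables. A one-time computation then shows that this substitution maps the left-hand side of~(\ref{rownanie_hiperniebianskie}) to itself — the curvature product $\Theta_{xx}\Theta_{yy}-\Theta_{xy}^2$ and the $\Lambda$-bracket are manifestly symmetric under $x\leftrightarrow y$, and the mixed term $\Theta_{qy}-\Theta_{px}$ is reproduced thanks to the signs introduced in $q,p$ — and that it sends $\Theta_{xxxx}$ to $\widetilde{\Theta}_{yyyy}$, where $\widetilde{\Theta}$ is the transformed key function. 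Hence $\widetilde{\Theta}$ is again a solution of~(\ref{rownanie_hiperniebianskie}) with $\widetilde{\Theta}_{yyyy}=0$, so Lemma~\ref{lemat_o_nieznikaniu_C1} applies and gives $\widetilde{\mathcal{D}}=0$, $\widetilde{\mathcal{D}}$ being the discriminant built from the curvature coefficients of $\widetilde{\Theta}$. It then remains to identify $\widetilde{\mathcal{D}}$ with $\mathcal{D}$: the substitution turns the ordered tuple $(\dot{C}^{(5)},\dot{C}^{(4)},\dot{C}^{(3)},\dot{C}^{(2)},\dot{C}^{(1)})$ into $(\dot{C}^{(1)},-\dot{C}^{(2)},\dot{C}^{(3)},-\dot{C}^{(4)},\dot{C}^{(5)})$, and one checks directly from the definitions of $I$ and $J$ (the classical invariants of the binary quartic $\mathcal{C}(\xi)$) that both are unchanged by this reshuffling — for $J$ this amounts to conjugating the Hankel matrix by the index-reversing permutation and by $\mathrm{diag}(1,-1,1)$, neither of which alters its determinant. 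Thus $\mathcal{D}=I^{3}-27J^{2}$ is invariant and $\widetilde{\mathcal{D}}=0$ forces $\mathcal{D}=0$.

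There is no analytic obstacle here: the entire mathematical content lies in Lemma~\ref{lemat_o_nieznikaniu_C1}, and the only point demanding care is sign bookkeeping — choosing the involution in~(\ref{gauge_wspolrzednych}) so that it simultaneously preserves the hyperheavenly equation and genuinely interchanges $\Theta_{xxxx}$ with $\widetilde{\Theta}_{yyyy}$, and then tracking the effect of the resulting permutation of the $\dot{C}^{(i)}$ on the invariants $I$ and $J$. Everything else is routine.
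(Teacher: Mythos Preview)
Your proposal is correct and takes essentially the same approach as the paper. The paper's own proof is the two-line observation that the metric~(\ref{metryka_w_xyqp}) and the $\mathcal{HH}$-equation~(\ref{rownanie_hiperniebianskie}) are symmetric under interchanging $x$ and $y$, so one simply swaps them and invokes Lemma~\ref{lemat_o_nieznikaniu_C1}; you have spelled out the same symmetry more carefully as the explicit gauge transformation $(q,p,x,y)\mapsto(p,-q,y,-x)$ (which is what is actually needed to keep the mixed term $\Theta_{qy}-\Theta_{px}$ intact) and verified by hand that $I$ and $J$ are preserved, whereas the paper leaves the invariance of $\mathcal{D}$ implicit as a property of the geometry.
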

\begin{proof}
The $\mathcal{HH}$-metric (\ref{metryka_w_xyqp}) and the $\mathcal{HH}$-equation (\ref{rownanie_hiperniebianskie}) are symmetric in the coordinates $x$ and $y$. It is enough to interchange $x$ and $y$ and the proof becomes identical to that of the Lemma \ref{lemat_o_nieznikaniu_C1}.
\end{proof}


\section{Algebraically general para-K\"ahler Einstein spaces with symmetries}
\label{Section_3}
\setcounter{equation}{0}

Obtaining the general analytic solution of equation (\ref{rownanie_hiperniebianskie}) seems to be beyond bounds of possibility and, as we have seen, the most promising simplifying assumption leads to algebraically degenerate pKE-spaces. Thus, we need more systematic approach, and in this Section we consider gpKE-spaces equipped with additional Killing symmetries.

\subsection{Space with a Killing vector}
\label{single_Killing_vector}

Let us assume that gpKE-space admits a Killing vector. Then we have
\begin{Lemat}
\label{lemat_o_pierwszym_Killingu}
If a space of the type $[\textrm{D}]^{nn} \otimes [\textrm{I}]$ admits a Killing vector $K_{1}$ then $K_{1}$ can be always brought to the form
\begin{equation}
\label{pierwszy_Killing}
K_{1} = \frac{\partial}{\partial q}
\end{equation}
and the key function $\Theta$ reduces to a function of only three variables, $\Theta = \Theta (p,x,y)$.
\end{Lemat}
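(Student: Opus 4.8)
The plan is to use the general form of a Killing vector given in \eqref{ogolna_postac_wektora_Killinga} together with the transformation laws \eqref{transformacja_delty} and the residual coordinate gauge \eqref{gauge_wspolrzednych}, and to argue that the gauge freedom is exactly rich enough to normalize $K_1$ to $\partial_q$. First I would note that by Remarks \ref{Uwaga_Killingi_1} and \ref{Uwaga_Killingi_2} the vector $K_1$ is a non-null Killing vector, so it has the shape \eqref{ogolna_postac_wektora_Killinga} determined by $\delta^{\dot A}=\delta^{\dot A}(q^{\dot B})$, and since $K_1$ is non-null the pair $(\delta^{\dot 1},\delta^{\dot 2})$ is not identically zero (if $\delta^{\dot A}\equiv 0$ then by Remark \ref{Uwaga_Killingi_2} the vector vanishes). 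Hence on a suitable neighborhood one of the components, say $\delta^{\dot 2}\ne 0$ (relabel otherwise), and $(\delta^{\dot 1},\delta^{\dot 2})$ is a nonvanishing vector field on the 2D $q^{\dot A}$-space depending only on $q^{\dot A}$.

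Next I would invoke the flow-box (straightening) theorem for the nonvanishing vector field $\delta^{\dot A}\partial_{q^{\dot A}}$ on the $(q,p)$-plane: there is a local change of coordinates $q'^{\dot A}=q'^{\dot A}(q^{\dot B})$ such that this field becomes $\partial/\partial q'$, i.e. $\delta'^{\dot 1}=1$, $\delta'^{\dot 2}=0$. This is precisely the content of the transformation rule \eqref{transformacja_delty}, $\delta'^{\dot M}=\frac{\partial q'^{\dot M}}{\partial q^{\dot B}}\delta^{\dot B}$, so such a $q'^{\dot A}$ exists. I would then check that for this $\delta^{\dot A}$ one has $\delta=\partial\delta^{\dot M}/\partial q^{\dot M}=0$, whence $\epsilon^{\dot A}=-\tfrac{1}{\Lambda}\partial_{q_{\dot A}}\delta=0$; substituting $\delta^{\dot 1}=1,\ \delta^{\dot 2}=0,\ \epsilon^{\dot A}=0$ into the coordinate expression for $K$ displayed after \eqref{transformacja_zeta} collapses all the $\partial_x$ and $\partial_y$ terms and leaves exactly $K_1=\partial/\partial q$, which is \eqref{pierwszy_Killing}.

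For the second assertion I would return to the master equation \eqref{master_equation} (equivalently its coordinate form) with $\delta^{\dot 1}=1$, $\delta^{\dot 2}=0$. All the cubic and quadratic terms on the right-hand side involve second derivatives of $\delta^{\dot A}$ or of $\delta$, hence vanish; the surviving right-hand side is $\zeta_{\dot 1}x+\zeta_{\dot 2}y-\tfrac{1}{\Lambda}(\partial_q\zeta^{\dot 1}+\partial_p\zeta^{\dot 2})$, while the left-hand side is $K_1(\Theta)=\Theta_q$. Thus $\Theta_q$ is at most linear in $x,y$ with coefficients depending on $(q,p)$ only. Here I would use the residual gauge: with $\delta$-data already normalized the remaining freedom is the choice of the functions $H_{\dot A}(q^{\dot B})$ in \eqref{transformacja_funkcji_kluczowej}, which shifts $\Theta$ by a polynomial of degree $\le 3$ in $p^{\dot A}$ (degree $\le 1$ in $x,y$) with coefficients built from $H_{\dot A}$; an appropriate choice of $H_{\dot A}$ absorbs the inhomogeneous linear-in-$x,y$ part of $\Theta_q$, forcing $\Theta_q=0$, i.e. $\Theta=\Theta(p,x,y)$. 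I expect the main obstacle to be bookkeeping this last step cleanly: one must verify that the $H_{\dot A}$-gauge really can be used to kill the residual $x,y$-linear terms in $\Theta_q$ without spoiling the already-achieved normalization $\delta^{\dot 1}=1,\delta^{\dot 2}=0$ and without introducing a $q$-dependence elsewhere — this amounts to checking that the relevant gauge transport equations for $H_{\dot A}$ along $\partial_q$ are solvable, which they are since they are linear first-order ODEs in $q$, but it is the step that needs care.
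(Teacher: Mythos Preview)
Your proposal is correct and follows essentially the same route as the paper: straighten the nonvanishing $\delta^{\dot A}\partial_{q^{\dot A}}$ to $\partial_q$ via \eqref{transformacja_delty}, then use the $H_{\dot A}$-freedom to kill the residual inhomogeneity in the master equation. The only difference is presentational: the paper invokes the transformation law \eqref{transformacja_zeta} for the $\zeta^{\dot A}$ directly, observing that the still-unconstrained $H_{\dot A}(q,p)$ can be chosen to make $\zeta'^{\dot A}=0$, after which the master equation reads $\Theta_q=0$ on the nose --- this sidesteps the bookkeeping you flagged as the main obstacle, since one never needs to track how the shift of $\Theta$ by the $H_{\dot A}$-terms in \eqref{transformacja_funkcji_kluczowej} interacts with $\partial_q$.
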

\begin{proof}
Any Killing vector admitted by a gpKE-space must be of the form (\ref{ogolna_postac_wektora_Killinga}) with $\delta^{\dot{1}} \ne 0$ or $\delta^{\dot{2}} \ne 0$. Assume, that $\delta^{\dot{1}} \ne 0$ holds true. Then one puts $\delta^{\dot{1}} =1 $ and $\delta^{\dot{2}} = 0$ without any loss of generality (compare transformation formula (\ref{transformacja_delty})). This proves (\ref{pierwszy_Killing}). From (\ref{transformacja_zeta}) it follows that functions $\zeta^{\dot{A}}$ can be gauged away. The master equation yields
\begin{equation}
\frac{\partial \Theta}{\partial q} = 0  \ \Longleftrightarrow \ \Theta = \Theta (p,x,y)
\end{equation}
\end{proof}  
Unfortunately, with only one symmetry assumed, the $\mathcal{HH}$-equation (\ref{rownanie_hiperniebianskie}) is slightly simplified. We conclude that a single symmetry is (probably) not enough for a reasonable simplification of the problem.

\begin{Uwaga}
\normalfont
The choice $K_{1} = \partial_{q}$ restricts the gauge (\ref{gauge_wspolrzednych})  to the conditions
\begin{eqnarray}
\label{gaug_pozostaly_po_pierwszym_Killingu}
&& q'=q + \mathfrak{f}(p), \ p'=p'(p), \ \frac{dp'}{dp} =: \mathfrak{h}(p), \ \Delta = \mathfrak{h} (p)
\\ \nonumber
&& H_{\dot{N}} = H_{\dot{N}} (p), \ N_{\dot{1}\dot{1}}=N_{\dot{1}\dot{2}}=0, \ N_{\dot{2}\dot{2}} = \frac{\mathfrak{h}^{\frac{2}{3}}}{\Lambda} \frac{\partial}{\partial p} (\mathfrak{h}^{-\frac{5}{3}} \mathfrak{h}_{p})
\\ \nonumber
&& L_{\dot{1}\dot{1}\dot{1}} = L_{\dot{1}\dot{1}\dot{2}} = 0, \ L_{\dot{1}\dot{2}\dot{2}} = -\frac{1}{3} \frac{\mathfrak{h}_{p}}{\mathfrak{h}}, \ L_{\dot{2}\dot{2}\dot{2}} = \frac{\mathfrak{h} \mathfrak{f}_{pp}-\mathfrak{h}_{p} \mathfrak{f}_{p}}{\mathfrak{h}}
\end{eqnarray}
\end{Uwaga}

\subsection{Space with two commuting Killing vectors}
\label{subsection_two_kommuting_Killing_vectors}

Let us assume now that a gpKE-space admits 2D algebra of commuting Killing vector fields. For this case one formulates
\begin{Lemat}
\label{lemat_o_drugim_Killingu}
If a space of the type $[\textrm{D}]^{nn} \otimes [\textrm{I}]$ admits 2D trivial algebra $\mathcal{A}$ of commuting Killing vector fields, then for arbitrary basis $\{K_{1}, K_{2} \}$ of $\mathcal{A}$ one can choose coordinate system $(q,p,x,y)$ to bring this basis to the form 
\begin{equation}
K_{1} = \frac{\partial}{\partial q} \quad \quad K_{2} = \frac{\partial}{\partial p}
\end{equation}
Moreover, in this coordinate system the key function $\Theta$ reduces to a function of only two variables, $\Theta = \Theta (x,y)$.
\end{Lemat}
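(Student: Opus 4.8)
The plan is to deduce the statement from Lemma~\ref{lemat_o_pierwszym_Killingu} by a simultaneous straightening of the two commuting fields, the only genuinely new ingredient being the use of algebraic generality (Corollary~\ref{wniosek_o_niezerowosci}) to exclude one degenerate configuration. By (\ref{ogolna_postac_wektora_Killinga}) every Killing vector is the lift of the base vector field $\hat K=\delta^{\dot{A}}\,\partial_{q^{\dot{A}}}$ on the surface $(q^{\dot{A}})$, with $\delta^{\dot{A}}=\delta^{\dot{A}}(q^{\dot{B}})$; hence $K\mapsto\hat K$ is a Lie algebra homomorphism, and it is injective because in (\ref{ogolna_postac_wektora_Killinga}) the condition $\delta^{\dot{A}}=0$ forces $\epsilon^{\dot{A}}=0$ and $K=0$. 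Thus $\hat K_1,\hat K_2$ are nonzero commuting vector fields on $(q^{\dot{A}})$.

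The first task is to prove that $\hat K_1$ and $\hat K_2$ are pointwise linearly independent. If not, then after shrinking the neighbourhood $\hat K_2=f\,\hat K_1$ for a function $f$, and $0=[\hat K_1,\hat K_2]=(\hat K_1 f)\,\hat K_1$ gives $\hat K_1 f=0$. Straightening $\hat K_1$ alone by an admissible change $q'^{\dot{A}}=q'^{\dot{A}}(q^{\dot{B}})$ in (\ref{gauge_wspolrzednych}), we may take $K_1=\partial_q$, so $\Theta=\Theta(p,x,y)$ by Lemma~\ref{lemat_o_pierwszym_Killingu} and $f=f(p)$. If $f$ were constant then $K_2=f\,K_1$, contradicting that $\{K_1,K_2\}$ is a basis; hence $f'\not\equiv 0$. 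For $K_2$ we have $\delta^{\dot{1}}_{(2)}=f(p)$, $\delta^{\dot{2}}_{(2)}=0$, so $\delta=\frac{\partial\delta^{\dot{M}}}{\partial q^{\dot{M}}}=0$, $\epsilon^{\dot{A}}_{(2)}=0$, and $K_2=f(p)\,\partial_q+f'(p)\,y\,\partial_x$. Substituting this and $\Theta_q=0$ into the master equation (\ref{master_equation}) written in the hyperheavenly coordinates collapses it to
\[
f'(p)\,y\,\Theta_x=-\frac{1}{6}\,f''(p)\,y^{3}+\zeta_{\dot{1}}\,x+\zeta_{\dot{2}}\,y-\frac{1}{\Lambda}\big(\partial_q\zeta^{\dot{1}}+\partial_p\zeta^{\dot{2}}\big).
\]
As the left-hand side is $q$-independent, so is the right-hand side, forcing $\zeta^{\dot{A}}=\zeta^{\dot{A}}(p)$; dividing by $f'(p)$ on the set where $f'\ne 0$ and integrating in $x$ shows that $\Theta$ is there at most quadratic in $x$, i.e.\ $\Theta_{xxxx}=0$. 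Corollary~\ref{wniosek_o_niezerowosci} then gives $\mathcal{D}=0$, contradicting type $[\textrm{I}]$. Hence $\hat K_1$ and $\hat K_2$ are everywhere independent.

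Next, two nonzero, commuting, pointwise independent vector fields on a surface can be straightened simultaneously: there exist base coordinates with $\hat K_1=\partial_q$ and $\hat K_2=\partial_p$, and the corresponding change $q'^{\dot{A}}=q'^{\dot{A}}(q^{\dot{B}})$ is a gauge transformation (\ref{gauge_wspolrzednych}). In this gauge $\delta^{\dot{A}}_{(1)}=(1,0)$ and $\delta^{\dot{A}}_{(2)}=(0,1)$ are constant, so $\delta=0$, $\epsilon^{\dot{A}}=0$, and every $\partial_{p^{\dot{B}}}$-component in (\ref{ogolna_postac_wektora_Killinga}) vanishes for both vectors; hence $K_1=\partial_q$ and $K_2=\partial_p$ exactly.

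It remains to reduce $\Theta$. Lemma~\ref{lemat_o_pierwszym_Killingu} applied to $K_1=\partial_q$ already gives $\Theta=\Theta(p,x,y)$. The master equation for $K_2=\partial_p$, with $\delta^{\dot{A}}_{(2)}$ constant, reduces to $\Theta_p=\zeta_{\dot{1}}(p)\,x+\zeta_{\dot{2}}(p)\,y+c(p)$ (the right side being $q$-independent forces the $\zeta^{\dot{A}}$ to depend on $p$ only). The affine-in-$(x,y)$ part is absorbed by a residual gauge transformation (\ref{transformacja_funkcji_kluczowej}) with $H_{\dot{A}}=H_{\dot{A}}(p)$, which by (\ref{transformacja_delty}) does not change $\delta^{\dot{A}}$ and so keeps $K_1=\partial_q$, $K_2=\partial_p$; this leaves $\Theta=M(p)+G(x,y)$. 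Finally, the term $-\Lambda\Theta$ in the hyperheavenly equation (\ref{rownanie_hiperniebianskie}) forces $M(p)$ to be constant, which may be absorbed into $G$, so $\Theta=\Theta(x,y)$. I expect the main obstacle to be the second step -- ruling out the collinear case $\hat K_2\parallel\hat K_1$ -- which is precisely where algebraic generality of the ASD Weyl tensor must be invoked; the rest is the classical straightening of commuting fields together with bookkeeping of the residual gauge (\ref{gaug_pozostaly_po_pierwszym_Killingu}).
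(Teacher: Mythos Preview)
Your proof is correct and follows essentially the same route as the paper's: straighten $K_1$ first, rule out the collinear case $\hat K_2\parallel\hat K_1$ by invoking Corollary~\ref{wniosek_o_niezerowosci}, then straighten $K_2$ and reduce $\Theta$ via the residual gauge. Your framing through the projection homomorphism $K\mapsto\hat K$ onto the base is a pleasant geometric repackaging of what the paper does by explicit manipulation of the residual gauge (\ref{gaug_pozostaly_po_pierwszym_Killingu}) and (\ref{zredukowane_wzory_trans_na_delta}); the paper differentiates the master equation three times in $x$ where you integrate once, but the conclusion $\Theta_{xxx}=0$ (hence $\Theta_{xxxx}=0$) is the same.

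One small redundancy: your last step, invoking the $-\Lambda\Theta$ term of the $\mathcal{HH}$-equation to kill $M(p)$, is not needed. The constant term $c(p)$ in your reduced master equation is precisely $-\Lambda^{-1}\partial_{q^{\dot A}}\zeta^{\dot A}$, so once the gauge $H_{\dot A}(p)$ is chosen to set $\zeta_{\dot A}=0$ (this is exactly the content of (\ref{transformacja_zeta}) with $\Delta=1$, $N=0$), the entire right-hand side vanishes and $\Theta_p=0$ directly, as in the paper.
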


\begin{proof}
The first Killing vector can be always brought to the form $K_{1} = \partial_{q}$ (see Lemma \ref{lemat_o_pierwszym_Killingu}). The second Killing vector $K_{2}$ has the form (\ref{ogolna_postac_wektora_Killinga}) and from the condition $[K_{1}, K_{2}]=0$ it follows that $\delta^{\dot{B}} = \delta^{\dot{B}} (p)$. From (\ref{transformacja_delty}) and (\ref{gaug_pozostaly_po_pierwszym_Killingu}) we find the transformation formulas for $\delta^{\dot{B}}$ of $K_{2}$
\begin{equation}
\label{zredukowane_wzory_trans_na_delta}
\delta'^{\dot{1}} = \delta^{\dot{1}} + \frac{d\mathfrak{f}}{dp} \, \delta^{\dot{2}}, \ \delta'^{\dot{2}} = \mathfrak{h} \, \delta^{\dot{2}}
\end{equation}

Consider first a case with $\delta^{\dot{2}} = 0$. Then $K_{2} = \delta^{\dot{1}} \, \partial_{q} + \dfrac{\partial \delta^{\dot{1}}}{\partial p} y \, \partial_{x}$ where $\delta^{\dot{1}} = \delta^{\dot{1}} (p)$. Inserting this form of $K_{2}$ into the master equation and differentiating it with respect to $x$ three times we find that $\dfrac{\partial \delta^{\dot{1}}}{\partial p}  \, \Theta_{xxx}=0$. Thus, $\Theta_{xxx}=0$ or $\delta^{\dot{1}} =\delta^{\dot{1}}_{0}= \textrm{const}$. However, both these possibilities lead to a contradiction. Indeed, $\Theta_{xxx}=0$ implies $C^{(5)}=C^{(4)}=0$ and the space is not a gpKE-space anymore (see Corollary \ref{wniosek_o_niezerowosci}). The second possibility implies $K_{2} =\delta^{\dot{1}}_{0} \, \partial_{q} = \delta^{\dot{1}}_{0} K_{1}$. Thus, $K_{1}$ and $K_{2}$ are not independent. Hence, $\delta^{\dot{2}}$ is necessarily nonzero.

With $\delta^{\dot{2}} \ne 0$ assumed, it follows from (\ref{zredukowane_wzory_trans_na_delta}) that one can always put $\delta^{\dot{2}}=1$, $\delta^{\dot{1}}=0$. This yields $K_{2} = \partial_{p}$. 

The master equation reduces to the form
\begin{equation}
\label{zredukowane_rownanie_master}
\frac{\partial \Theta}{\partial p} = \zeta_{\dot{A}} p^{\dot{A}} - \frac{1}{\Lambda} \frac{\partial \zeta^{\dot{A}}}{\partial q^{\dot{A}}}
\end{equation}
Since $\Theta_{q}=0$, from (\ref{zredukowane_rownanie_master}) we find that $\zeta_{\dot{A}} = \zeta_{\dot{A}} (p)$. It means that using the gauge functions $H_{\dot{N}} = H_{\dot{N}} (p)$ one can always gauge away $\zeta_{\dot{A}}$. Hence, from (\ref{zredukowane_rownanie_master}) one gets $\Theta_{p}=0 \ \Longleftrightarrow \ \Theta = \Theta (x,y)$.
\end{proof}
 
\begin{Uwaga}
\normalfont
When $K_1$ and $K_2$ are fixed and brought to  $K_{1} = \partial_{q}$ and $K_{2} = \partial_{p}$, then the gauge (\ref{gauge_wspolrzednych}) becomes restricted to the transformations
\begin{equation}
\label{gaug_pozostaly_po_drugim_Killingu}
 q'=q + q_{0}, \ p'=p+p_{0}, \  \Delta = 1, \ H_{\dot{N}} = \textrm{const}, \ N_{\dot{A}\dot{B}}= L_{\dot{A}\dot{B}\dot{C}} = 0
\end{equation}
Hence, the remaining gauge is very limited: there are only four constants available. 
\end{Uwaga}

With $K_{1} = \partial_{q}$ and $K_{2} = \partial_{p}$, the $\mathcal{HH}$-equation reduces to the form
\begin{equation}
\label{rownanie_hiperniebianskie_dwa_komutujace_Killingi}
 \Theta_{xx} \Theta_{yy} - \Theta_{xy}^{2}
+ \Lambda \bigg( x \Theta_{x} + y \Theta_{y} - \Theta - \frac{1}{3} x^{2} \Theta_{xx} - \frac{1}{3} y^{2} \Theta_{yy} - \frac{2}{3} xy \Theta_{xy}   \bigg) = 0
\end{equation}
Eq. (\ref{rownanie_hiperniebianskie_dwa_komutujace_Killingi}) is still hard to solve. The most obvious step is an assumption that the key function separates, $\Theta (x,y) = F(x) G(y)$. However, this implies $\mathcal{D}=0$. (We are not going to give a proof of this fact here, as it is quite long, purely algebraical, but otherwise straightforward). 

One may suspect that a symmetry algebra $\mathcal{A}=2A_{1}$ could be too strong and could cause an algebraic degeneracy of the ASD Weyl tensor. However, there is no obvious reason for all solutions of (\ref{rownanie_hiperniebianskie_dwa_komutujace_Killingi}) to give $\mathcal{D}=0$. Consequently, in the next Section we consider  cases with even one more symmetry.

\subsection{Spaces with a 3D algebra of infinitesimal symmetries}
\label{three_Killings}

Let us recall the classification of real 3D Lie algebras as given in Table \ref{Lie_Algebras_table}. (We follow Ref. \cite{Patera} and the Table I therein). Since we want to employ results of the previous Section our considerations are restricted to 3D algebras of Killing vectors which admit 2D trivial subalgebra $\mathcal{A}$. Thus, algebras $A_{3,8}$ and $A_{3,9}$ are excluded from the very beginning. We are going to consider an arbitrary Killing vector $K_3$ not belonging to the algebra $\mathcal{A}$, and to identify remaining cases that can be relevant for algebraically nondegenerate gpKE-spaces.

\begin{table}[ht]
\begin{center}
\begin{tabular}{|c|c|c|}   \hline
 Name  &   Nonzero commutation relations & Comments          \\ \hline \hline
 $3A_{1}$ &  & Abelian   \\ \hline
 $A_{2,1} \oplus A_{1}$ & $[e_1, e_3] = e_1$  & decomposable, non-Abelian   \\ \hline
 $A_{3,1}$ & $[e_2, e_3] = e_1$  &  nilpotent  \\ \hline
 $A_{3,2}$ & $[e_1, e_3]=e_1$, $[e_2, e_3]=e_1+e_2$ &   solvable \\ \hline
 $A_{3,3}$ & $[e_1, e_3]=e_1$, $[e_2, e_3]=e_2$ &   solvable \\ \hline
 $A_{3,4}$ & $[e_1, e_3]=e_1$, $[e_2, e_3]=-e_2$ &   solvable \\ \hline
 $A_{3,5}^{a}$ & $[e_1, e_3]=e_1$, $[e_2, e_3]=ae_2$, $0< |a| <1$ &   solvable \\ \hline
 $A_{3,6}$ & $[e_1, e_3]=-e_2$, $[e_2, e_3]=e_1$ &   solvable \\ \hline
 $A_{3,7}^{a}$ & $[e_1, e_3]=ae_1-e_2$, $[e_2, e_3]=e_1+ae_2$, $a>0$ &   solvable \\ \hline
 $A_{3,8}$ & $[e_1, e_3]=-2e_2$, $[e_1, e_2]=e_1$, $[e_2, e_3]=e_3$ &   semisimple \\ \hline
 $A_{3,9}$ & $[e_1, e_2]=e_3$, $[e_2, e_3]=e_1$, $[e_3, e_1]=e_2$ &   semisimple \\ \hline
\end{tabular}
\caption{Real Lie algebras of dimensions three.}
\label{Lie_Algebras_table}
\end{center}
\end{table}

Let us choose basis $\{K_1,K_2\}$ of $\mathcal{A}$ in a way that is completely arbitrary for now, and let $(q,p,x,y)$ be corresponding coordinates introduced in Lemma \ref{lemat_o_drugim_Killingu}. One can rewrite the formula (\ref{ogolna_postac_wektora_Killinga}) in the form
\begin{equation}
K_{3} = f \partial_{q} + h \partial_{p} + \mathcal{K}_{p} \, \partial_{x} - \mathcal{K}_{q} \, \partial_{y}
\end{equation}
where
\begin{equation}
\mathcal{K} := fy - hx - \frac{1}{\Lambda} \delta, \ \delta^{\dot{1}} =: f (q,p), \ \delta^{\dot{2}} =: h(q,p), \ \delta = f_{q} + h_{p}
\end{equation}
The master equation reads
\begin{eqnarray}
\label{master_eq_dla_trzeciego}
\mathcal{K}_{p} \Theta_{x} - \mathcal{K}_{q} \Theta_{y} + 2 \delta \Theta &=& \frac{1}{6} \frac{\partial^{2} h}{\partial q^{\dot{B}} \partial q^{\dot{C}}} \, x p^{\dot{B}} p^{\dot{C}} - \frac{1}{6} \frac{\partial^{2} f}{\partial q^{\dot{B}} \partial q^{\dot{C}}} \, y p^{\dot{B}} p^{\dot{C}}
\\ \nonumber
&& +\frac{1}{2 \Lambda} \frac{\partial^{2} \delta}{\partial q^{\dot{A}} \partial q^{\dot{B}}} \, p^{\dot{A}} p^{\dot{B}} + \zeta_{\dot{A}} p^{\dot{A}} - \frac{1}{\Lambda} \frac{\partial \zeta^{\dot{A}}}{\partial q^{\dot{A}}}
\end{eqnarray}
and the commutation rules are
\begin{eqnarray}
\label{komutatory_dla_D3}
[K_1, K_3] &=& f_{q} \, K_1 + h_{q} \, K_2 + \mathcal{K}_{qp} \, \partial_{x} - \mathcal{K}_{qq} \, \partial_{y}
\\ \nonumber
[K_2, K_3] &=& f_{p} \, K_1 + h_{p} \, K_2 + \mathcal{K}_{pp} \, \partial_{x} - \mathcal{K}_{qp} \, \partial_{y}
\end{eqnarray}

\begin{Lemat}
\label{lemat_o_trzecim_Killingu}
The vector $K_{3}$ cannot commute neither with $K_1$ nor with $K_2$, otherwise $\mathcal{D}=0$.
\end{Lemat}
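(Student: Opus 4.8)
The plan is to prove the contrapositive: if $K_{3}$ commutes with $K_{1}$ or with $K_{2}$, then $\mathcal{D}=0$, which is impossible since a gpKE-space (a space of type $[\textrm{D}]^{nn}\otimes[\textrm{I}]$) has $\mathcal{D}\ne 0$ by definition. The two hypotheses $[K_{1},K_{3}]=0$ and $[K_{2},K_{3}]=0$ are handled by structurally identical computations with the roles of $(q,x)$ and $(p,y)$ interchanged, so I would carry out the first in detail and merely indicate the changes for the second.

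Assume first $[K_{1},K_{3}]=0$. Comparing with (\ref{komutatory_dla_D3}) and using that $K_{1}=\partial_{q}$, $K_{2}=\partial_{p}$, $\partial_{x}$, $\partial_{y}$ are linearly independent, this is equivalent to $f_{q}=h_{q}=0$. Then $f=f(p)$, $h=h(p)$, $\delta=f_{q}+h_{p}=h_{p}(p)$, and $\mathcal{K}=f(p)y-h(p)x-\frac{1}{\Lambda}h_{p}(p)$ no longer depends on $q$, so $\mathcal{K}_{q}\equiv 0$ automatically. Hence $K_{3}=f(p)\,\partial_{q}+h(p)\,\partial_{p}+\mathcal{K}_{p}\,\partial_{x}$ with $\mathcal{K}_{p}=f_{p}y-h_{p}x-\frac{1}{\Lambda}h_{pp}$, and since $\Theta=\Theta(x,y)$ one gets $K_{3}(\Theta)=\mathcal{K}_{p}\Theta_{x}$.

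The heart of the argument is to substitute this into the master equation (\ref{master_eq_dla_trzeciego}) and differentiate twice in $x$. After the specialization the left-hand side is $\mathcal{K}_{p}\Theta_{x}+2h_{p}\Theta$, while on the right-hand side every term is at most linear in $x$: the cubic contribution reduces to $\frac{1}{6}h_{pp}xy^{2}$, the term $\zeta_{\dot{A}}p^{\dot{A}}$ is linear in $x$, and the remaining terms $-\frac{1}{6}f_{pp}y^{3}$, $\frac{1}{2\Lambda}h_{ppp}y^{2}$, $-\frac{1}{\Lambda}\frac{\partial\zeta^{\dot{A}}}{\partial q^{\dot{A}}}$ are $x$-independent. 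Thus $\partial_{x}^{2}$ annihilates the whole right-hand side; on the left, using $\partial_{x}\mathcal{K}_{p}=-h_{p}$, one finds $\partial_{x}(\mathcal{K}_{p}\Theta_{x}+2h_{p}\Theta)=h_{p}\Theta_{x}+\mathcal{K}_{p}\Theta_{xx}$, and a second differentiation collapses this to $\mathcal{K}_{p}\Theta_{xxx}$. Hence $\mathcal{K}_{p}\,\Theta_{xxx}\equiv 0$. Now $\mathcal{K}_{p}\equiv 0$ would force $f_{p}\equiv h_{p}\equiv 0$, i.e. $f,h$ constant, so $\mathcal{K}_{p}=-\frac{1}{\Lambda}h_{pp}=0$ and $K_{3}=f_{0}\,\partial_{q}+h_{0}\,\partial_{p}\in\mathcal{A}$, contrary to the choice of $K_{3}$; therefore at some $p$ the map $(x,y)\mapsto\mathcal{K}_{p}$ is a non-constant affine function, vanishing only along a line, so $\Theta_{xxx}=0$ off that line and, by continuity, everywhere. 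Then $\dot{C}^{(5)}=2\Theta_{xxxx}=0$ and Corollary \ref{wniosek_o_niezerowosci} gives $\mathcal{D}=0$. For $[K_{2},K_{3}]=0$ the same steps give $f=f(q)$, $h=h(q)$, $\mathcal{K}_{p}\equiv 0$, $K_{3}(\Theta)=-\mathcal{K}_{q}\Theta_{y}$, a right-hand side at most linear in $y$, and $\partial_{y}^{2}$ yields $\mathcal{K}_{q}\,\Theta_{yyy}\equiv 0$, hence $\Theta_{yyyy}=0$ and $\mathcal{D}=0$ by Lemma \ref{lemat_o_nieznikaniu_C1}.

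The only point requiring real care is the bookkeeping in the master equation: one must verify that, after the specialization of $f,h,\delta$ and of the gauge-induced functions $\zeta_{\dot{A}}$, the right-hand side genuinely is affine in the variable being differentiated. Once this is checked, the double differentiation and the cancellation on the left-hand side are immediate, and no residual gauge freedom is needed; the remaining estimates are routine.
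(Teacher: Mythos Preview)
Your argument is correct and follows essentially the same route as the paper's proof: specialize $f,h$ via $[K_{1},K_{3}]=0$, differentiate the master equation twice in $x$ to obtain $\mathcal{K}_{p}\Theta_{xxx}=0$, then use the dichotomy ($\mathcal{K}_{p}\equiv 0$ forces $K_{3}\in\mathcal{A}$, otherwise $\Theta_{xxx}=0$) together with Corollary~\ref{wniosek_o_niezerowosci}. Your version is in fact slightly more careful than the paper's, since you explicitly verify that the right-hand side of the master equation is affine in $x$ after the specialization and you include the continuity argument that promotes $\Theta_{xxx}=0$ from the complement of a line to the whole domain.
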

\begin{proof}
Assume that $[K_1, K_3]=0$. This implies $f=f(p)$, $h=h(p)$ and $\delta = \delta(p)$. Differentiating Eq. (\ref{master_eq_dla_trzeciego}) with respect to $x$ twice one obtains
\begin{equation}
\left( f_{p} y - h_{p} x - \frac{1}{\Lambda} h_{pp} \right) \Theta_{xxx} = 0
\end{equation}
Hence, $f_{p} = h_{p}=0$ or $\Theta_{xxx}=0$. The first possibility gives $K_{3} = f_{0} K_{1} + h_{0} K_2$ so $K_{3}$ is a linear combination of $K_1$ and $K_2$. The second possibility leads to an algebraic degeneracy of the ASD Weyl tensor, by the virtue of Corollary \ref{wniosek_o_niezerowosci}.

The case $[K_2, K_3]=0$ can be treated analogously.
\end{proof}

\begin{Uwaga}
\label{uwaga_o_algebrach}
\normalfont
Note, that an algebra with $[K_1, K_2]=0$, $[K_1, K_3]=0$ and $[K_2, K_3] \ne 0$ (or $[K_1, K_2]=0$, $[K_2, K_3]=0$ and $[K_1, K_3] \ne 0$) is $A_{2,1} \oplus A_{1}$ while an algebra with  $[K_1, K_2]=[K_1, K_3]=[K_2, K_3] = 0$ is $3A_{1}$. Hence, both these algebras lead to an algebraically degenerate ASD Weyl tensor. 
\end{Uwaga}

The key function $\Theta$ depends on variables $x$ and $y$, while functions $f$ and $h$ on $q$ and $p$. Consistency conditions between $\Theta(x,y)$, $f(q,p)$ and $h(q,p)$ should be established via Eq. (\ref{master_eq_dla_trzeciego}). However, this approach turns out to be quite problematic. Thus, we find the form of $K_{3}$ in a slightly different way. Values of the commutators (\ref{komutatory_dla_D3}) are spanned by $\{ K_1, K_2, K_3 \}$. Hence, there must exist constants $\alpha$ and $\beta$ such that
\begin{equation}
\label{proportionality}
\mathcal{K}_{pq} = \alpha \mathcal{K}_{p}, \ \mathcal{K}_{qq} = \alpha \mathcal{K}_{q}, \ 
\mathcal{K}_{pp} = \beta \mathcal{K}_{p},  \ \mathcal{K}_{qp} = \beta \mathcal{K}_{q}
\end{equation}
With relations (\ref{proportionality}) assumed, the commutators (\ref{komutatory_dla_D3}) take the form
\begin{eqnarray}
[K_1, K_3] &=& (f_{q} - \alpha f) K_1 + (h_{q} - \alpha h) K_2 + \alpha K_3
\\ \nonumber
[K_2, K_3] &=& (f_{p} - \beta f) K_1 + (h_{p} - \beta h) K_2 + \beta K_3
\end{eqnarray}
Let us consider possible values of the proportionality factors $\alpha$ and $\beta$. 

\textbf{The case $\alpha=\beta=0$}. Hence, $\mathcal{K}_{qq} = \mathcal{K}_{qp} = \mathcal{K}_{pp}=0$ implying $f_{qq} = f_{qp} = f_{pp} = h_{qq} = h_{qp} = h_{pp}=0$. Thus, $f=a_{0} p + b_{0} q + f_{0}$, $h=m_{0} p + n_{0} q + h_{0}$ where $a_{0}$, $b_{0}$, $m_{0}$, $n_{0}$, $f_{0}$ and $h_{0}$ are some constants. Both $f_{0}$ and $h_{0}$ can be brought to $0$ without any loss of generality. Finally, one finds
\begin{equation}
\label{wektor_K3}
K_{3} = a_{0} (p \partial_{q} + y \partial_{x}) + b_{0} (q \partial_{q} - y \partial_{y}) + n_{0} (q \partial_{p} + x \partial_{y}) + m_{0} (p \partial_{p} - x \partial_{x})
\end{equation}
and the master equation (\ref{master_eq_dla_trzeciego}) takes the form
\begin{equation}
\label{master_dla_K3}
(a_{0} y - m_{0} x) \Theta_{x} + (n_{0} x - b_{0}y) \Theta_{y} + 2 (b_{0} + m_{0}) \Theta = \zeta_{\dot{1}} x + \zeta_{\dot{2}} y
\end{equation}
where $\zeta_{\dot{1}}$ and $\zeta_{\dot{2}}$ are constants. Also, the commutators read
\begin{eqnarray}
\label{kommutatory_dwa}
[K_1, K_3] &=& b_{0} \, K_1 + n_{0} \, K_2 
\\ \nonumber
[K_2, K_3] &=& a_{0} \, K_1 + m_{0} \, K_2 
\end{eqnarray}

\textbf{The case $\alpha=0$, $\beta \ne 0$}. We easily find $\mathcal{K}_{q}=0 \ \Longrightarrow \ f_{q}=h_{q}=0$. This yields $[K_1, K_3] =0$. By the virtue of the Lemma \ref{lemat_o_trzecim_Killingu} this case leads to the algebraic degeneracy of an ASD Weyl tensor. \textbf{The case $\alpha\neq0$, $\beta = 0$} can be analyzed in a completely analogous way.

\textbf{The case $\alpha \ne 0$, $\beta \ne 0$} corresponds to an algebra $A_{2,1} \oplus A_{1}$ and by the virtue of the Remark \ref{uwaga_o_algebrach} it also leads to an algebraically degenerate ASD Weyl tensor. We omit tedious but straightforward calculations that prove this observation.

We conclude analysis of possible values of proportionality factors $\alpha$, $\beta$ with the observation that the only viable Lie algebras are given by formulas (\ref{wektor_K3}) and (\ref{kommutatory_dwa}). However, the choice of a basis $\{ K_1, K_2 \}$ of trivial subalgebra $\mathcal{A}$ has been not fixed yet. Using this freedom, together with possible scaling of $K_3$, one can bring constants $b_0,n_0,a_0,m_0$ to standard values summarized in Table \ref{Lie_Algebras_table_2}.
\begin{table}[ht]
\begin{center}
\begin{tabular}{|c|c|c|c|c|}   \hline
Algebra  &   $b_{0}$ & $n_{0}$  &  $a_{0}$  & $m_{0}$           \\ \hline \hline
 $A_{3,2}$ &  1 & 0  &  1  & 1    \\ \hline
 $A_{3,3}$ &  1 & 0  &  0  & 1 \\ \hline
 $A_{3,4}$ &  1 & 0  &  0  & -1 \\ \hline
 $A_{3,5}^{m_{0}}$ & 1 & 0  &  0  & $m_{0}$ \\ \hline
 $A_{3,6}$ & 0 & -1 &  1  & 0 \\ \hline
 $A_{3,7}^{\alpha_0}$ & $\alpha_0$ & -1  &  1  & $\alpha_0$ \\ \hline
\end{tabular}
\caption{Possible 3D algebras of symmetries for gpKE-spaces, with a trivial 2D subalgebra. The parameters $m_0$ and $\alpha_0$ are subject to the restrictions $0<|m_0|<1$ and $\alpha_0 >0$.}
\label{Lie_Algebras_table_2}
\end{center}
\end{table}

Let us briefly describe the procedure of arriving at these standard values. Using spinorial notation a basis of the subalgebra  $\mathcal{A}$ can be written as $K_{\dot{A}}=\frac{\partial}{\partial q^{\dot{A}}}$. Relations (\ref{wektor_K3}) and (\ref{kommutatory_dwa}) take form
\begin{equation}
K_3 = q^{\dot{A}} \tensor{M}{_{\dot{A}}^{\dot{B}}}\frac{\partial}{\partial q^{\dot{B}}} - p_{\dot{B}} \tensor{M}{_{\dot{A}}^{\dot{B}}}\frac{\partial}{\partial p_{\dot{A}}} 
\end{equation}
and
\begin{equation}
    [K_{\dot{A}},K_3]=\tensor{M}{_{\dot{A}}^{\dot{B}}} K_{\dot{B}}
\end{equation}
with a matrix
\begin{equation}
\begin{pmatrix}
\tensor{M}{_{\dot{A}}^{\dot{B}}}
\end{pmatrix}=
\begin{pmatrix}
b_0 & n_0\\
a_0 & m_0
\end{pmatrix}
\end{equation}
Linear transformation $K'_{\dot{A}}= \tensor{S}{_{\dot{A}}^{\dot{B}}} K_{\dot{B}}$ of a basis of  $\mathcal{A}$ results in the similarity transformation $\tensor{{M'}}{_{\dot{A}}^{\dot{B}}} = \tensor{{S}}{_{\dot{A}}^{\dot{C}}}
\tensor{{M}}{_{\dot{C}}^{\dot{D}}} \tensor{{S^{-1}}}{_{\dot{D}}^{\dot{B}}}$. Such a transformation can be used to bring $\tensor{{M}}{_{\dot{A}}^{\dot{B}}}$ to a real canonical form first. Then redundant parameters can be removed by a scaling  of $K'_{\dot{A}}$ and $K_3$, and in turn values from the Table \ref{Lie_Algebras_table_2} are obtained (with a trivial reordering of basis of $\mathcal{A}$ for the case of  $A_{3,2}$). If $\alpha<0$ then it is enough to reverse the vectors $K_1$ and $K_3$ to get $\alpha_0>0$. The restriction on $m_0$ will be justified at the beginning of Section \ref{Algebra_A35gen}. Thus, without loss of generality, we assume from now that Killing vectors $K_1$, $K_2$ and $K_3$ are chosen to produce commutation relations described in the Table \ref{Lie_Algebras_table_2}.

Finally let us remark that transformation formulas for the constants $\zeta_{\dot{A}}$ will be helpful in a further analysis. With relations (\ref{gaug_pozostaly_po_drugim_Killingu}) assumed, we find the following:
\begin{eqnarray}
\label{transformacje_na_stale_xi}
\zeta'_{\dot{1}} &=& \zeta_{\dot{1}} - n_{0} H_{\dot{2}} - (2b_{0} + m_{0}) H_{\dot{1}}
\\ \nonumber
\zeta'_{\dot{2}} &=& \zeta_{\dot{2}} - a_{0} H_{\dot{1}} - (2m_{0} + b_{0}) H_{\dot{2}}
\end{eqnarray}


\section{Metrics}
\label{Sekcja_4}
\setcounter{equation}{0}
In this Section we advance the process of solving $\mathcal{HH}$-equation as far as possible for all cases listed in the Table \ref{Lie_Algebras_table_2}. The corresponding first-order ordinary differential equations are obtained, which, in most cases, we are unable to solve. However, this does not prevent us from establishing existence of gpKE-spaces with a 3D algebra of Killing vectors of a given type, due to the following 
\begin{Uwaga}
\normalfont
\label{uwaga_o_istnieniu}
Suppose that $\mathcal{HH}$-equation has been reduced to a first-order ODE
\begin{equation}
\label{bardzo_ogolna_postac_zred_HH}
    U_\tau = \varphi(\tau,U)
\end{equation}
and that the discriminant $\mathcal{D}$ has been, using relation between $\Theta$ and $U$ and by virtue of equation (\ref{bardzo_ogolna_postac_zred_HH}), expressed as a function $\mathcal{D}(\tau,U)$ which is not identically equal to zero. 
Then the existence of a solution satisfying the condition of algebraic nondegeneracy can be obtained from the Peano theorem by taking as an initial condition any point $(\tau_0,U_0)$ such that in its neighborhood the functions $\varphi(\tau,U)$, $\mathcal{D}(\tau,U)$ are continuous and $\mathcal{D}(\tau,U) \neq  0$. 
\end{Uwaga}

Applying this Remark requires a computation of the discriminant $\mathcal{D}$. This, in most cases, has been accomplished by means of a computer algebra software. If overly complicated, the resulting expression has been abbreviated to a more concise form, preserving its general structure.

\subsection{Algebra $A_{3,2}$}
\label{algebra_A32}

Let us consider an algebra $A_{3,2}$. Hence, structure constants read $a_{0}=b_{0}=m_{0}=1$, $n_{0}=0$ (compare Table \ref{Lie_Algebras_table_2}).

\begin{Twierdzenie}
\label{Twierdzenie_A_32}
Let $(\mathcal{M}, ds^{2})$ be an Einstein complex space of the type $[\textrm{D}]^{nn} \otimes [\textrm{I}]$ equipped with a 3D algebra of infinitesimal symmetries $A_{3,2}$. Then there exists a local coordinate system $(q,p,w,y)$ such that the metric takes the form 
\begin{eqnarray}
\label{metryka_wspol_qpzy}
\frac{1}{2} ds^{2} &=& dqdy - dx dp - \frac{\Lambda}{3} y^{2} \left(  \frac{3 \Sigma}{ 12 \Sigma -(12 w +1) ( 3w + 1  ) } - 1 \right) dq^{2}
\\ \nonumber
&& - \frac{2 \Lambda}{3} y \left(  3 w y +  (y-x) \frac{3 \Sigma}{ 12 \Sigma -(12 w +1) ( 3w + 1  ) } +  x \right)  dpdq
\\ \nonumber
&& - \frac{\Lambda}{3}  \bigg( (12 \Sigma - 36 w^{2} - 16w)y^{2}  +  ( 7y^{2} - 6 xy ) w 
\\ \nonumber
&& \ \ \ \ \ \ \ \ \ \ \ \ + ( y-x )^{2} \frac{3 \Sigma}{ 12 \Sigma -(12 w +1) ( 3w + 1  ) } -  x^2 \bigg)  dp^{2}
\end{eqnarray}
where $\Lambda \ne 0$ is the cosmological constant, $\Sigma = \Sigma (w)$ is a holomorphic function which satisfies the equation
\begin{equation}
\label{zredukowane_HH_A3D_wersja_4}
\Sigma \Sigma_{w} = \left( 10w + \frac{4}{3} \right) \Sigma - \frac{1}{3} w (12 w +1) ( 3w + 1  )
\end{equation}
and
\begin{equation}
\label{relation_xyw_A_32}
 x (w,y) = -y  \ln y + y \left( 4w - \int \frac{(12w+1)(3w+1)}{3\Sigma} dw  \right)
\end{equation}
\end{Twierdzenie}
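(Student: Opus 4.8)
The plan is to start from the reduced $\mathcal{HH}$-equation (\ref{rownanie_hiperniebianskie_dwa_komutujace_Killingi}) together with the master equation (\ref{master_dla_K3}) specialized to the structure constants $a_0=b_0=m_0=1$, $n_0=0$ of $A_{3,2}$. First I would write out the master equation explicitly as a first-order PDE for $\Theta(x,y)$, namely $(y-x)\Theta_x + x\Theta_y + 4\Theta = \zeta_{\dot 1}x + \zeta_{\dot 2}y$; using the residual gauge (\ref{transformacje_na_stale_xi}) I would try to absorb or normalize the constants $\zeta_{\dot A}$. The natural move is then to integrate this linear first-order PDE by the method of characteristics: the characteristic ODE $\frac{dx}{y-x}=\frac{dy}{x}$ yields an invariant combination of $x$ and $y$ (morally a single variable $w$, a ratio-type quantity homogeneous of degree zero), and the inhomogeneous linear part integrates to give $\Theta$ in terms of one function of that invariant, times an appropriate weight in $y$. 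This is exactly the origin of the substitution in which $\Theta$ is expressed through a function $\Sigma(w)$ and the relation (\ref{relation_xyw_A_32}) between $x$, $y$, $w$ appears — the $\ln y$ and the $\int \frac{(12w+1)(3w+1)}{3\Sigma}\,dw$ terms are the telltale signature of integrating along characteristics where $w$ itself is defined implicitly through $\Sigma$.

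Next I would substitute this characteristic form of $\Theta$ into the reduced $\mathcal{HH}$-equation (\ref{rownanie_hiperniebianskie_dwa_komutujace_Killingi}). Because $\Theta$ now depends on $(x,y)$ only through $y$ and the single invariant $w=w(x,y)$, all second derivatives $\Theta_{xx}, \Theta_{xy}, \Theta_{yy}$ collapse (after using the chain rule and the defining relation for $w$) into expressions involving $\Sigma(w)$, $\Sigma_w(w)$ and rational functions of $w$. The Monge–Ampère combination $\Theta_{xx}\Theta_{yy}-\Theta_{xy}^2$ and the $\Lambda$-linear Euler-type operator $x\Theta_x+y\Theta_y-\Theta-\tfrac13 x^2\Theta_{xx}-\tfrac13 y^2\Theta_{yy}-\tfrac23 xy\Theta_{xy}$ should, after the dust settles, reduce to a single first-order ODE in $\Sigma(w)$; matching it against the claimed Abel-type equation (\ref{zredukowane_HH_A3D_wersja_4}) $\Sigma\Sigma_w=(10w+\tfrac43)\Sigma-\tfrac13 w(12w+1)(3w+1)$ is the bookkeeping endpoint. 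I expect one needs to be careful that the would-be ODE does not secretly contain residual dependence on $y$ (or on $x$ not through $w$); the consistency of the construction requires that all such terms cancel, and verifying this cancellation is part of the derivation. Along the way the explicit coefficients of the metric (\ref{metryka_wspol_qpzy}) are obtained simply by inserting the computed $\Theta_{xx},\Theta_{xy},\Theta_{yy}$ — re-expressed through $\Sigma$ and $w$ — into the general $\mathcal{HH}$-metric (\ref{metryka_w_xyqp}); the conspicuous denominator $12\Sigma-(12w+1)(3w+1)$ is presumably $\Theta_{xx}$ (or a closely related curvature-type quantity) written in the $w$-variable.

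The main obstacle, I anticipate, is precisely the change of variables and the attendant implicit definition of $w$. Since $w$ is not an elementary function of $(x,y)$ but is defined through the integral in (\ref{relation_xyw_A_32}), which in turn involves $\Sigma$ itself, the chain-rule computations are circular in appearance and one must organize them carefully: differentiate the implicit relation (\ref{relation_xyw_A_32}) to get $w_x$, $w_y$ in closed form (these will be rational in $w$ with $\Sigma$ in the denominator), then propagate through to the second derivatives of $\Theta$. Keeping track of which quantities are functions of $w$ alone versus genuinely of $(x,y)$ is the delicate part, and it is where a sign error or a missing factor would derail the reduction. A secondary, more routine obstacle is the sheer algebraic bulk of substituting everything into the Monge–Ampère equation; this is the step the authors most plausibly delegate to computer algebra, and I would do likewise, treating (\ref{zredukowane_HH_A3D_wersja_4}) as the target identity to be confirmed rather than derived by hand. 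Finally, one should note that (\ref{zredukowena_HH_A3D_wersja_4}) is manifestly of Abel second-kind form $u u_\tau = f(\tau)u^2+g(\tau)u+h(\tau)$ with $f\equiv 0$, consistent with the paper's stated program; its general solution being unavailable is not an obstacle to the theorem, which only asserts the reduction together with the existence guaranteed by Remark \ref{uwaga_o_istnieniu}.
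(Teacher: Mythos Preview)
Your overall strategy --- solve the master equation by characteristics, then plug the resulting ansatz into the reduced $\mathcal{HH}$-equation --- is correct and is what the paper does. However, there is a genuine conceptual gap in your identification of the variable $w$, and a related miscount of the order of the resulting ODE.

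First, a small error: with $a_0=b_0=m_0=1$, $n_0=0$, the master equation (\ref{master_dla_K3}) reads $(y-x)\Theta_x - y\Theta_y + 4\Theta = \zeta_{\dot 1}x + \zeta_{\dot 2}y$, not $(y-x)\Theta_x + x\Theta_y + \ldots$ as you wrote. The characteristic invariant is therefore $z := \ln y + x/y$, and (after gauging away $\zeta_{\dot A}$) the general solution is $\Theta = \tfrac{\Lambda}{3}\,y^{4} F(z)$ for an arbitrary function $F$ of one variable.

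The more serious issue is that $w$ is \emph{not} this characteristic invariant. Substituting $\Theta = \tfrac{\Lambda}{3} y^4 F(z)$ into (\ref{rownanie_hiperniebianskie_dwa_komutujace_Killingi}) yields a \emph{second}-order ODE for $F$, not a first-order one: one gets
\[
F_z F_{zz} + 12 F F_{zz} - 9 F_z^2 - F_{zz} - 4F_z - 3F = 0.
\]
The point you are missing is that this ODE is autonomous in $z$, so the standard order-reduction $w := F_z$, regarded as a function of $F$, collapses it to first order; a further swap (treat $w$ as independent, $F = F(w)$) and the affine substitution $3\Sigma := 3F + 9w^2 + 4w$ then bring it to the Abel form (\ref{zredukowane_HH_A3D_wersja_4}). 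The relation (\ref{relation_xyw_A_32}) is not a characteristic of the master equation at all: it arises a posteriori from $z = \int F_w\,dw/w$, i.e.\ from undoing the substitution $w=F_z$. So the ``circularity'' you worried about is not a feature of the derivation but an artifact of reading the final formulas backwards; in the actual argument $z$ is explicit and elementary, and $w$ enters only after the ODE has been written down.
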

\begin{proof}
If $a_{0}=b_{0}=m_{0}=1$, $n_{0}=0$, both constants $\zeta_{\dot{1}}$ and $\zeta_{\dot{2}}$ can be gauged away (compare the transformation formulas (\ref{transformacje_na_stale_xi})) and the general solution to the master equation (\ref{master_dla_K3}) yields
\begin{equation}
\label{A_32_pomocnicze_4}
\Theta = \frac{\Lambda}{3} y^{4} \, F(z), \ \ z:= \ln y + \frac{x}{y}
\end{equation}
The second derivatives of the key function read 
\begin{eqnarray}
\label{A_32_pomocnicze_2}
&& \Theta_{xx} = \frac{\Lambda}{3}  y^{2} F_{zz}, \ \Theta_{xy} =\frac{\Lambda}{3}  \left( 3y^{2} F_{z} + y (y-x) F_{zz} \right), 
\\ \nonumber
&& \Theta_{yy} = \frac{\Lambda}{3}  \left( 12y^{2} F + y(7y - 6x) F_{z} + (y-x)^{2} F_{zz} \right)
\end{eqnarray}
Hence, the $\mathcal{HH}$-equation (\ref{rownanie_hiperniebianskie_dwa_komutujace_Killingi}) takes the form
\begin{equation}
\label{zredukowane_HH_A3D}
F_{z} F_{zz} + 12 F F_{zz} - 9 F_{z}^{2} -    F_{zz} - 4 F_{z} - 3F = 0
\end{equation}
Eq. (\ref{zredukowane_HH_A3D}) can be reduced to a first--order ODE. Indeed, the substitution $F_{z} = w (F)$ implies $F_{zz} = w w_{F}$. Consequently, from (\ref{zredukowane_HH_A3D}) one obtains
\begin{equation}
\label{zredukowane_HH_A3D_wersja_2}
(w^{2}  -w + 12 F w)w_{F} - 9 w^{2}  - 4 w - 3F = 0
\end{equation}
The discriminant $\mathcal{D}$ for a key function obtained from the equation (\ref{zredukowane_HH_A3D_wersja_2}) is given by 
\begin{equation}
\label{wyroznik_A32}
\begin{split}
    \mathcal{D}(F,w)=& \frac{64 \Lambda ^6 (3 F+w)^2 }{(12 F+w-1)^{16}} \Big(18144 F^2 w^2+10368 F^3 w+1998 F^2 w+1512
   F^3\\
   &-135 F^2+1944 F w^4+7668 F w^3-225 F w^2-261 F w-810 w^5\\
   &-534 w^4-727
   w^3-126 w^2\Big)^2
   \Big(-979776 F^2 w^5-1119744 F^3 w^4\\
   &+23112 F^2w^4+2066688 F^3 w^3
   -460440 F^2 w^3+6531840 F^4 w^2\\
   &-1186272 F^3 w^2+45333
   F^2 w^2+7464960 F^5 w-1425600 F^4 w\\
   &+84348 F^3 w-3294 F^2 w+2985984
   F^6-622080 F^5+46980 F^4\\
   &-2268 F^3+81 F^2-279936 F w^6-84132 F w^5-72672 F
   w^4\\
   &+14898 F w^3-144 F w^2+162 F w+26244 w^8+26568 w^7+35656 w^6\\
   &+17480
   w^5+7333 w^4+882 w^3+81 w^2\Big)
\end{split}
\end{equation}
The existence of a considered space follows from the Remark \ref{uwaga_o_istnieniu} applied for relations (\ref{zredukowane_HH_A3D_wersja_2}) and (\ref{wyroznik_A32}).
Now let us treat $w$ as a variable and $F$ as a function of $w$, $F=F(w)$. Note that $1 = F_{w} w_{F}$. Hence, from (\ref{zredukowane_HH_A3D_wersja_2}) it follows that
\begin{equation}
\label{zredukowane_HH_A3D_wersja_3}
(9w^{2} + 4w +3F) F_{w} = w^{2} - w +12wF
\end{equation}
Another substitution 
\begin{equation}
\label{A_32_pomocnicze_3}
3 \Sigma := 3F + 9w^{2} + 4w
\end{equation}
reduces Eq. (\ref{zredukowane_HH_A3D_wersja_3}) to the form (\ref{zredukowane_HH_A3D_wersja_4}).

One easily finds how $F$, $F_{z}$ and $F_{zz}$ are expressed in terms of $\Sigma$ and $w$:
\begin{equation}
\label{A_32_pomocnicze_1}
F = \Sigma - \frac{4}{3}w - 3 w^{2}, \ F_{z} = w, \ F_{zz} = \frac{3 \Sigma}{ 12 \Sigma -(12 w +1) ( 3w + 1  ) }
\end{equation}
Inserting (\ref{A_32_pomocnicze_1}) into (\ref{A_32_pomocnicze_2}) and then into (\ref{metryka_w_xyqp}) we arrive at the metric (\ref{metryka_wspol_qpzy}).

Finally, from the relation $F_{z} = w (F)$ one finds that $z = \int \frac{dF}{w} = \int \frac{F_{w} dw}{w}$. Using the relation (\ref{A_32_pomocnicze_3}), Eq. (\ref{zredukowane_HH_A3D_wersja_4}) and the definition of $z$ (\ref{A_32_pomocnicze_4}) one proves that (\ref{relation_xyw_A_32}) is correct.
\end{proof}

\begin{Uwaga}
\normalfont
To bring Eq. (\ref{zredukowane_HH_A3D_wersja_4}) to the canonical form one defines a new variable $v$ as follows
\begin{equation}
v := 5w^{2} + \frac{4}{3} w
\end{equation}
Hence, Eq. (\ref{zredukowane_HH_A3D_wersja_4}) takes the form 
\begin{equation}
\label{postac_kanoniczna_drugirodzaj_A32}
\Sigma \Sigma_{v} - \Sigma = - \frac{(3V-4)(2V-1)(V + 2)}{500 V}, \ V (v) := \pm \sqrt{\frac{16}{9} + 20v}
\end{equation}
where $\Sigma = \Sigma (v)$.
\end{Uwaga}

\subsection{Algebra $A_{3,3}$}
\label{algebra_A33}

In this case $b_{0}=1$, $n_{0} = a_{0} = 0$ and $m_{0} = 1$. The constants $\zeta_{\dot{A}}$ can be gauged away (compare (\ref{transformacje_na_stale_xi})) and the solution to Eq. (\ref{master_dla_A_33}) reads
\begin{equation}
\Theta = y^{4} F(z), \ z: =\frac{x}{y}
\end{equation}
The $\mathcal{HH}$-equation (\ref{rownanie_hiperniebianskie_dwa_komutujace_Killingi}) takes the form
\begin{equation}
12FF_{zz} - 9F_{z}^{2} - \Lambda F=0
\end{equation}
which can be simply solved. Finally, the solution for $\Theta$ reads
\begin{equation}
\label{Rozwiazanie_A33_for_m1}
\Theta = \left(    F_{0}  (x+ G_{0} y)^{2} + \frac{\Lambda y^{2}}{48F_{0}}  \right)^{2}, \ F_{0}, G_{0} \ \textrm{are constants}
\end{equation}
However, the solution (\ref{Rozwiazanie_A33_for_m1}) implies $\mathcal{D}=0$. Hence, the algebra $A_{3,3}$ corresponds to algebraically degenerate solutions.

\subsection{Algebra $A_{3,4}$}
\label{Algebra_A34}

Let us deal with an algebra $A_{3,4}$. Hence, we put $b_{0}=1$, $n_{0} = a_{0} = 0$ and $m_{0} =- 1$.

\begin{Twierdzenie}
\label{Twierdzenie_A_34}
Let $(\mathcal{M}, ds^{2})$ be an Einstein complex space of the type $[\textrm{D}]^{nn} \otimes [\textrm{I}]$ equipped with a 3D algebra of infinitesimal symmetries $A_{3,4}$. Then there exists a local coordinate system $(q,p,x,y)$ such that the metric takes the form 
\begin{eqnarray}
\label{metryka_kanoniczna_A_34}
\frac{1}{2} ds^{2}&=& dq dy - dp dx -  \frac{\Lambda}{2^{\frac{4}{3}}} r^{-\frac{4}{3}} \left( \frac{3}{4} - \frac{\Sigma}{2r } \right)^{-1}  (y dq + x dp)^{2}
\\ \nonumber
 && \ \ \ \ \ \ \ \ \ \ \ \ \ \ \ 
+ \frac{\Lambda}{3} \left( \frac{3}{4} - \frac{\Sigma}{2r } \right) (y dq - x dp)^{2}
 \end{eqnarray}
where $\Lambda \ne 0$ is the cosmological constant, $\Sigma = \Sigma (r)$ is a holomorphic function which satisfies the equation
\begin{equation}
\label{HH_equation_for_A_34_formalizm_Sigma}
 \Sigma \Sigma_{r} - \Sigma = \frac{3}{4}r + 2^{\frac{2}{3}} r^{-\frac{1}{3}}
\end{equation}
and
\begin{equation}
\label{algebra_A_34_zwiazek_xy_r}
xy =  r^{\frac{2}{3}} \exp \left( -  \int \frac{dr}{\Sigma} \right)
\end{equation}
\end{Twierdzenie}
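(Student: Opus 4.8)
The plan is to mimic the proof of Theorem~\ref{Twierdzenie_A_32}, now with structure constants $b_{0}=1$, $n_{0}=a_{0}=0$, $m_{0}=-1$. Since $2(b_{0}+m_{0})=0$, the master equation (\ref{master_dla_K3}) loses its undifferentiated $\Theta$-term, and, using the transformation formulas (\ref{transformacje_na_stale_xi}) to gauge away both constants $\zeta_{\dot{A}}$, it collapses to the homogeneous first-order PDE $x\Theta_{x}-y\Theta_{y}=0$. Its general solution is $\Theta=\Phi(z)$ with $z:=xy$. Feeding $\Theta_{xx}=y^{2}\Phi_{zz}$, $\Theta_{xy}=\Phi_{z}+z\Phi_{zz}$, $\Theta_{yy}=x^{2}\Phi_{zz}$ into the reduced $\mathcal{HH}$-equation (\ref{rownanie_hiperniebianskie_dwa_komutujace_Killingi}) makes all the dependence on $x,y$ disappear and leaves a single second-order ODE for $\Phi(z)$.

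Unlike the $A_{3,2}$ case, that ODE is not autonomous; it is, however, invariant under the scaling $z\mapsto\lambda z$, $\Phi\mapsto\lambda^{2}\Phi$. I would exploit this by substituting $\Phi=z^{2}\Psi(\ln z)$ (equivalently, by the standard homogeneity reduction), which turns it into an autonomous second-order ODE for $\Psi$. This is then lowered to first order exactly as before, through $\Psi'=w(\Psi)$, $\Psi''=w\,w_{\Psi}$. At this stage I would compute the discriminant $\mathcal{D}$ in terms of $(\Psi,w)$ and the first-order ODE, verify --- with the aid of computer algebra --- that it does not vanish identically, and invoke Remark~\ref{uwaga_o_istnieniu} to conclude that an algebraically nondegenerate solution exists; the excluded locus $\Phi_{zzz}=0$, i.e.\ $\Theta_{xxx}=0$, is algebraically degenerate anyway by Corollary~\ref{wniosek_o_niezerowosci}.

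The heart of the matter is the final reduction. Swapping dependent and independent variables ($w$ becomes independent, $\Psi=\Psi(w)$, $1=\Psi_{w}w_{\Psi}$) turns the first-order equation into one that is rational and linear in $\Psi_{w}$. I expect that a suitable linear substitution $\Sigma:=\Sigma(\Psi,w)$, combined with a reparametrisation $w\mapsto r$, brings it to the canonical Abel equation of the second kind $\Sigma\Sigma_{r}-\Sigma=\frac{3}{4}r+2^{2/3}r^{-1/3}$, which is (\ref{HH_equation_for_A_34_formalizm_Sigma}); the fractional factors $2^{2/3}$ and $r^{-1/3}$ are what fix the normalisation of $r$. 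Running the substitutions backwards expresses $\Phi_{z}$ and $\Phi_{zz}$, hence $\Theta_{xx},\Theta_{xy},\Theta_{yy}$, as explicit functions of $\Sigma$, $r$ and $z=xy$; inserting these into (\ref{metryka_w_xyqp}) and regrouping the $dq^{2},\,dq\,dp,\,dp^{2}$ terms into $(y\,dq+x\,dp)^{2}$ and $(y\,dq-x\,dp)^{2}$ produces the metric (\ref{metryka_kanoniczna_A_34}), its coefficients being read as functions of $xy$ through the implicitly defined $r$. Finally, from $\Psi'=w$ one gets $\ln z=\int\Psi_{w}\,dw/w$, which together with (\ref{HH_equation_for_A_34_formalizm_Sigma}) integrates to the coordinate relation (\ref{algebra_A_34_zwiazek_xy_r}).

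The step I expect to fight hardest with is exactly this last reduction: choosing the reparametrisation $w\mapsto r$ and the linear form of $\Sigma$ simultaneously so that the unwieldy rational first-order ODE collapses to the compact shape (\ref{HH_equation_for_A_34_formalizm_Sigma}). A close second is the discriminant computation, which is algebraically bulky, must be carried out by machine, and then needs to be certified as not identically zero.
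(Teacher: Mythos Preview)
Your outline is correct and matches the paper's approach almost exactly: the paper also reduces to $\Theta=\Theta(z)$ with $z=xy$, then uses the scaling symmetry via $\Theta=\tfrac{4\Lambda}{3}v^{4}g(\ln v)$ with $v=z^{1/2}$ (equivalent to your $\Phi=z^{2}\Psi(\ln z)$), lowers the order, and swaps variables to reach the Abel form. The only tactical refinement you might miss is that the paper takes the shifted substitution $g_{w}=Q(g)-3g$ rather than $g_{w}=Q(g)$, which cleans up the first-order equation before the final change $Q=2^{-4/3}r^{-4/3}$, $g=-\tfrac{\Sigma}{2r}-2^{-4/3}r^{-4/3}-\tfrac14$ is applied.
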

\begin{proof}
Because $m_{0} = -1$ the constants $\zeta_{\dot{A}}$ can be put zero without any loss of generality. A general solution to the master equation (\ref{master_dla_A_33}) reads
\begin{equation}
\Theta = \Theta (z), \ z:= xy
\end{equation}
The $\mathcal{HH}$-equation takes the form
\begin{equation}
\label{HH_equation_for_A_34}
\Theta_{z}^{2} + 2z \Theta_{z} \Theta_{zz} + \frac{\Lambda}{3} ( 4 z^{2} \Theta_{zz} - 4z \Theta_{z} + 3\Theta ) =0
\end{equation}
If we introduce a new function $T(v)$ defined as follows
\begin{equation}
\label{Definicja_T_via_Theta}
\Theta (z) := \frac{4 \Lambda}{3} \, T(v), \ v:= z^{\frac{1}{2}}
\end{equation}
then the second derivatives of $\Theta$ yield
\begin{eqnarray}
&& \Theta_{xx} = \frac{\Lambda}{3} \frac{y^{2}}{v^{2}} \left( T_{vv} - \frac{1}{v} \, T_{v} \right), \ 
\Theta_{yy} = \frac{\Lambda}{3} \frac{x^{2}}{v^{2}} \left( T_{vv} - \frac{1}{v} \, T_{v} \right), 
\\ \nonumber
&& \Theta_{xy} = \frac{\Lambda}{3} \frac{xy}{v^{2}} \left( T_{vv} + \frac{1}{v} \, T_{v} \right)
\end{eqnarray}
and the metric expressed in terms of $T(v)$ and written in the hyperheavenly coordinates reads
\begin{equation}
\label{metryka_algebra_A_34_FORMA}
\frac{1}{2} ds^{2} = dq dy - dp dx - \frac{\Lambda}{3} \frac{T_{vv}}{v^{2}} (y dq + x dp)^{2} + \frac{\Lambda}{3} \left(\frac{T_{v}}{v^{3}} + 1 \right) (y dq - x dp)^{2}
\end{equation}
Substituting (\ref{Definicja_T_via_Theta}) into (\ref{HH_equation_for_A_34}) one gets
\begin{equation}
\label{HH_equation_for_A_34_formalizm_T}
T_{v} T_{vv} + v^{3} T_{vv} - 3 v^{2} T_{v} + 3v T = 0
\end{equation}
Equation (\ref{HH_equation_for_A_34_formalizm_T}) can be reduced to an autonomous form by the substitution $T(v)=v^4 g (w )$ where $w:=\ln v$. One obtains the relation
\begin{equation}
    3 g + 48 g^2 + 4 g_w  + 40 g g_w  + 7  g_w^2 +  g_{ww}  + 4 g g_{ww}  +  g_w g_{ww} = 0.
\end{equation}
Putting $g_w = Q(g) - 3 g$ and consequently $g_{ww} = (Q_g - 3) (Q - 3g )$ one brings it to the first--order equation
\begin{equation}
\label{HH_equation_for_A_34_formalizm_Q}
   (1 + 4 g + 4 Q) Q +  ( Q -3 g ) (1 + g +  Q) Q_g = 0 
\end{equation}
for a function $Q(g)$. Using this relation one can express the discriminant $\mathcal{D}$ in terms of variables $g$ and $Q$ 
\begin{equation}
\label{wyroznik_dla_algebry_A34_Qg}
\begin{split}
    \mathcal{D}(g,Q)=&\frac{576 \Lambda ^6 Q^2}{(g+ Q+1)^{16}} \left(g (g+1)+2 (g-1) Q+ Q^2\right)^2\\
    &\times\left(2 g (g+1)^2+\left(6 g^2+2 g-1\right) Q+2 (3 g-1)  Q^2+2  Q^3\right)^2 \\
    &\times\left(2 g (g+1)^3 +(g+1) (8 g^2+ 10 g -1) Q+3 \left(4 g^2+6 g-5\right)  Q^2\right.\\
    &\quad \quad \left. +2 (4 g+3)  Q^3+2 Q^4\right)^2
\end{split}    
\end{equation}
and the problem existence of can be resolved by means of the Remark \ref{uwaga_o_istnieniu}. Furthermore, one easily finds that
\begin{equation}
\frac{T_{v}}{v^{3}} = Q +g , \ \frac{T_{vv}}{v^{2}} = \frac{3Q}{Q +g +1}
\end{equation}
and the metric can be written as
\begin{equation}
\label{metryka_algebra_A_34_FORMA_2}
\frac{1}{2} ds^{2} = dq dy - dp dx -  \frac{\Lambda Q}{Q +g +1} (y dq + x dp)^{2} + \frac{\Lambda}{3} (Q +g +1) (y dq - x dp)^{2}
\end{equation}
The relation between $x$, $y$ and $g$ is given by the formula
\begin{equation}
\label{algebra_A_34_zwiazek_xy_g}
xy = v^{2} = e^{2w} = \exp \left( 2 \int \frac{dg}{Q-3g}  \right)
\end{equation}
Finally, one reduces Eq. (\ref{HH_equation_for_A_34_formalizm_Q}) to an Abel equation. We treat $Q$ as a variable and $g$ as a function of $Q$. Of course, $Q_{g} g_{Q} = 1$. To be more precise, we put
\begin{equation}
Q =: \frac{1}{2^{\frac{4}{3}} r^{\frac{4}{3}}}, \ g  =: - \frac{\Sigma}{2r}  - \frac{1}{2^{\frac{4}{3}} r^{\frac{4}{3}}} -\frac{1}{4}
\end{equation}
where $\Sigma (r)$ is a new function. With such a substitution, Eq. (\ref{HH_equation_for_A_34_formalizm_Q}) and the metric (\ref{metryka_algebra_A_34_FORMA_2}) reduce to the forms (\ref{HH_equation_for_A_34_formalizm_Sigma}) and (\ref{metryka_kanoniczna_A_34}), respectively. Finally, one finds that (\ref{algebra_A_34_zwiazek_xy_g}) yields
\begin{equation}
\label{algebra_A_34_zwiazek_xy_g_posredni}
xy  = \exp \left( 2^{\frac{7}{3}} \int \frac{  r^{\frac{4}{3}} g_{r} dr}{1-3 \cdot 2^{\frac{4}{3}} r^{\frac{4}{3}}g}  \right)
\end{equation}
Using the definition of $\Sigma$ and Eqs. (\ref{HH_equation_for_A_34_formalizm_Sigma}) in (\ref{algebra_A_34_zwiazek_xy_g_posredni}) one arrives at (\ref{algebra_A_34_zwiazek_xy_r}).
\end{proof}

It is probably the best, what we can do within an algebra $A_{3,4}$. Eq. (\ref{HH_equation_for_A_34_formalizm_Sigma}) seems to be simpler then its counterpart in algebra $A_{3,2}$, but still its general solution is unknown (at least, it cannot be found in an extensive list of known solutions to an Abel equation published in \cite{Polyanin}).
Formula (\ref{wyroznik_dla_algebry_A34_Qg}) implies $\mathcal{D} \geqslant 0$, thus for the nondegenerated real neutral case $\mathcal{D}\neq 0$ one must deal with types $[\textrm{I}_{r}]$ or $[\textrm{I}_{c}]$. Unfortunately, cumbersome and rather obscure expressions for $\mathcal{P}$ and $\mathcal{R}$ prevented us from separating these two options.

\subsection{Algebra $A_{3,5}^{m_{0}}$, $m_{0} \ne -\frac{1}{2}$}
\label{Algebra_A35gen}

Here we have $b_{0}=1$, $n_{0} = a_{0} = 0$, while $m_{0}$ could be, at a glance, arbitrary, except for $m_{0}=0$ which leads to an algebra $A_{2,1} \oplus A_{1}$, and, consequently, to $\mathcal{D}=0$. The master equation (\ref{master_dla_K3}) takes the form
\begin{equation}
\label{master_dla_A_33}
 - m_{0} x \Theta_{x}  - y \Theta_{y} + 2 (1 + m_{0}) \Theta = \zeta_{\dot{1}} x + \zeta_{\dot{2}} y
\end{equation}
Note, that if one replaces $m_{0}  \longleftrightarrow  \frac{1}{m_{0}}$ and $x \longleftrightarrow y$, Eq. (\ref{master_dla_A_33}) remains unchanged (with an appropriate redefinition of constants $\zeta_{\dot{A}}$). The metric (\ref{metryka_w_xyqp}) also remains unchanged under replacement $x \longleftrightarrow y$. Thus, we can restrict considerations to $m_{0}$ such that $0 < |m_{0}| < 1$. In what follows we assume also that $m_{0} \ne  -\frac{1}{2}$.

\begin{Twierdzenie}
\label{Twierdzenie_A_35}
Let $(\mathcal{M}, ds^{2})$ be an Einstein complex space of the type $[\textrm{D}]^{nn} \otimes [\textrm{I}]$ equipped with a 3D algebra of infinitesimal symmetries $A_{3,5}^{m_{0}}$ where $-1 < m_{0} < 1$, $m_{0} \ne 0$, $m_{0} \ne -\frac{1}{2}$. Then there exists a local coordinate system $(q,p,x,y)$ such that the metric takes the form 
\begin{eqnarray}
\label{metryka_kanoniczna_A_35}
\frac{1}{2} ds^{2} &=& dqdy-dpdx + \frac{\Lambda}{3} (xdp-ydq)^{2} - \frac{\Lambda}{3} (2m_{0}+1)(2m_{0} +2) \, g x^{2} \, dp^{2}
\\ \nonumber
&& - \frac{\Lambda}{3} \Bigg[ \left(  \frac{(m_{0}-1)^{\frac{8}{3}}}{2^{\frac{4}{3}} r^{\frac{4}{3}}} + \frac{3 g}{1 - m_{0}} \right) \left(-\frac{2^{\frac{2}{3}} (m_{0}-1)^{\frac{8}{3}}}{3 r^{\frac{7}{3}} g_{r}} + \frac{3 }{1 - m_{0}} \right)
\\ \nonumber
&& \ \ \ \ \ \ \ 
 - \frac{5(m_{0}-1)^{\frac{8}{3}}}{2^{\frac{4}{3}} r^{\frac{4}{3}}} - \frac{15 g}{1 - m_{0}} + 6g \Bigg] (m_{0} x dp - y dq)^{2}
\\ \nonumber
&& +\frac{\Lambda}{3} \left( \frac{(m_{0}-1)^{\frac{8}{3}}}{2^{\frac{4}{3}} r^{\frac{4}{3}}} + \frac{3 g}{1 - m_{0}} - 2g \right) 
\\ \nonumber
&& \ \ \ \ \ \ \ \ \times \big( - m_{0} (5 m_{0} + 3) x^{2} dp^{2} - 2y^{2} dq^{2} + 2 (3 m_{0} +2) xy dq dp 
\big)
\end{eqnarray}
where $\Lambda \ne 0$ is the cosmological constant, $g=g(r)$ reads
\begin{equation}
 g = \dfrac{ (m_{0}-1)^{2}  \dfrac{\Sigma}{2r} - (1-m_{0}^{3}) \dfrac{(m_{0}-1)^{\frac{8}{3}}}{2^{\frac{4}{3}} r^{\frac{4}{3}}} +\dfrac{1}{4} (m_{0}-1)^{2}} {(m_{0}+2) (2 m_{0}+1)}
\end{equation}
$\Sigma = \Sigma (r)$ is a holomorphic function which satisfies the equation
\begin{equation}
\label{Algebra_A_35_roownanie_final}
\Sigma \Sigma_{r} - \Sigma = \frac{3}{4}r - \frac{2^{-\frac{2}{3}}}{3} (m_{0}+1)^{2} (m_{0}-1)^{\frac{16}{3}} r^{-\frac{5}{3}} - 2^{\frac{2}{3}} m_{0} (m_{0}-1)^{\frac{5}{3}} r^{-\frac{1}{3}}
\end{equation}
and
\begin{equation}
\label{algebra_A_35_zwiazek_xy_r}
\frac{y^{m_{0}}}{x} = \exp \left(  \int \frac{g_{r} dr}{\frac{(m_{0}-1)^{\frac{8}{3}}}{2^{\frac{4}{3}} r^{\frac{4}{3}}}
 + \frac{3 g}{1 - m_{0}}} \right)
\end{equation}
\end{Twierdzenie}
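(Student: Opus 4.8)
The plan is to run, in the same order, the reduction scheme that yielded Theorems~\ref{Twierdzenie_A_32} and~\ref{Twierdzenie_A_34}: gauge the constants $\zeta_{\dot A}$ away, integrate the master equation to fix the functional form of $\Theta$, insert that form into the two-Killing-vector $\mathcal{HH}$-equation~(\ref{rownanie_hiperniebianskie_dwa_komutujace_Killingi}), reduce the resulting ODE to first order in two stages, verify algebraic non-degeneracy through the discriminant, and finally transform everything into the Abel canonical form and read off the metric.

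First I would fix the gauge. With $b_{0}=1$, $n_{0}=a_{0}=0$ the transformation rules~(\ref{transformacje_na_stale_xi}) become $\zeta'_{\dot1}=\zeta_{\dot1}-(2+m_{0})H_{\dot1}$ and $\zeta'_{\dot2}=\zeta_{\dot2}-(1+2m_{0})H_{\dot2}$. Since $-1<m_{0}<1$, both $2+m_{0}$ and $1+2m_{0}$ are nonzero exactly when $m_{0}\neq-\frac12$, so under this hypothesis both constants can be removed by a suitable choice of $H_{\dot A}$; this is precisely why $m_{0}=-\frac12$ has to be treated separately in Section~\ref{Sekcja_5}. With $\zeta_{\dot A}=0$ the master equation~(\ref{master_dla_A_33}) is the homogeneous linear first-order PDE $m_{0}x\Theta_{x}+y\Theta_{y}=2(1+m_{0})\Theta$, whose characteristic invariant is $z:=y^{m_{0}}/x$ and whose general solution is $\Theta=y^{2(1+m_{0})}F(z)$ with $F$ an arbitrary function.

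Next I would substitute this $\Theta$ into~(\ref{rownanie_hiperniebianskie_dwa_komutujace_Killingi}). By homogeneity the explicit $x,y$ dependence factors out, leaving a second-order ODE for $F(z)$ with polynomial coefficients in $z$ carrying the parameter $m_{0}$. Following the $A_{3,4}$ pattern I would first render it autonomous by a substitution $F=z^{a}g(w)$, $w=\ln z$, with $a$ chosen so that the $z$-powers cancel, and then lower its order by setting $g_{w}=Q(g)+cg$ with the appropriate constant $c$, arriving at a first-order ODE $\Phi(g,Q,Q_{g})=0$. Using this relation the discriminant $\mathcal{D}$ of Table~\ref{kryteria_niezdegenerowania} can be rewritten as a function $\mathcal{D}(g,Q)$; a computer-assisted computation should display it, as in~(\ref{wyroznik_A32}) and~(\ref{wyroznik_dla_algebry_A34_Qg}), as a nonvanishing rational prefactor times squares of polynomials, so $\mathcal{D}(g,Q)\not\equiv0$, and Remark~\ref{uwaga_o_istnieniu} then yields genuinely algebraically general solutions for generic initial data. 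Finally, introducing $r$ and $\Sigma=\Sigma(r)$ through the algebraic substitution expressing $Q$ and $g$ in terms of $r$ and $\Sigma/r$ recorded in the theorem, the equation $\Phi=0$ turns into the Abel equation of the second kind~(\ref{Algebra_A_35_roownanie_final}); re-expressing $\Theta_{xx}$, $\Theta_{xy}$, $\Theta_{yy}$ through $g$ and $g_{r}$ and plugging into~(\ref{metryka_w_xyqp}) produces the metric~(\ref{metryka_kanoniczna_A_35}), while integrating $w$ as a function of $r$ and using $w=\ln(y^{m_{0}}/x)$ gives the relation~(\ref{algebra_A_35_zwiazek_xy_r}).

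The hard part is bookkeeping rather than conceptual: pinning down the exponent $a$, the constant $c$, and above all the rational-algebraic change of variables $Q=Q(r)$, $g=g(r,\Sigma)$ that linearises the first-order equation into Abel canonical form is delicate, and establishing $\mathcal{D}\not\equiv0$ cannot realistically be done by hand — the raw expression for $\mathcal{D}$ must first be reorganised into a product of a nonvanishing prefactor with squared polynomial factors before its non-triviality becomes transparent.
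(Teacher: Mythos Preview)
Your plan is correct and follows the paper's proof essentially step by step: gauge away $\zeta_{\dot A}$ (possible precisely because $2+m_{0}\neq0$ and $1+2m_{0}\neq0$), solve the master equation as $\Theta=y^{2+2m_{0}}F(z)$ with $z=y^{m_{0}}/x$, substitute into the $\mathcal{HH}$-equation, make it autonomous via $F=\tfrac{\Lambda}{3}z^{-2}g(w)$ with $w=\ln z$ (so your unspecified exponent is $a=-2$), lower the order via $g_{w}=Q(g)+\tfrac{3g}{1-m_{0}}$ (your $c=\tfrac{3}{1-m_{0}}$), compute $\mathcal{D}(g,Q)$ and invoke Remark~\ref{uwaga_o_istnieniu}, and finally pass to the Abel canonical form through the substitutions recorded in the statement. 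The only point you slightly compress is that the Abel reduction in the paper proceeds in two stages: first swapping roles so that $Q=:s$ is independent and $g=g(s)$, with the affine substitution $(m_{0}+2)(2m_{0}+1)g=s^{3/4}\Sigma-(1-m_{0}^{3})s+\tfrac14(m_{0}-1)^{2}$, and then the power change $s=(m_{0}-1)^{8/3}/(2^{4/3}r^{4/3})$; also, the $m_{0}=-\tfrac12$ case is handled in Section~\ref{Algebra_A_35_05_Sekcja}, not Section~\ref{Sekcja_5}.
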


\begin{proof}
If $m_{0} \ne  -\frac{1}{2}$ then $\zeta_{\dot{A}}$ can be gauged away and a solution to the master equation reads
\begin{equation}
\label{Theta_for_m0_different}
 \Theta = y^{2 +2 m_{0}} \, F(z), \ z = \frac{y^{m_{0}}}{x}
\end{equation}
Hence
\begin{eqnarray}
\label{Theta_for_m0_different_drugie_pochodne}
\Theta_{xx} &=& y^{2} (2z^{3} F_{z} + z^{4} F_{zz})
\\ \nonumber
\Theta_{xy} &=& -y^{1 + m_{0}} \big( (2 + 3m_{0}) z^{2} F_{z} + m_{0} z^{3} F_{zz} \big)
\\ \nonumber
\Theta_{yy} &=& y^{2m_{0}} \big( (2 + 2m_{0}) (2 m_{0}+1) F + m_{0} (5m_{0} +3) zF_{z} + m_{0}^{2} z^{2} F_{zz} \big)
\end{eqnarray}
Inserting (\ref{Theta_for_m0_different}) and (\ref{Theta_for_m0_different_drugie_pochodne}) into the $\mathcal{HH}$-equation (\ref{rownanie_hiperniebianskie_dwa_komutujace_Killingi}) we obtain
\begin{eqnarray}
\label{H_for_A33}
\nonumber
&& z^{3} \big[  m_{0} (m_{0}-1) z^{2} F_{z} F_{zz} + (2m_{0}+1)(2m_{0} +2) (zFF_{zz} + 2FF_{z}) + (m_{0}^{2} - 6m_{0} -4) z F_{z}^{2} \big]
\\ 
&&+ \frac{\Lambda}{3} \big[ (1-2m_{0})(1 + 2 m_{0}) F - (m_{0}-1) (5m_{0}-1) zF_{z} - (m_{0}-1)^{2} z^{2} F_{zz} \big] =0
\end{eqnarray}
The substitution 
\begin{equation}
F(z) =: \frac{\Lambda}{3} z^{-2} g(w), \ w:= \ln z
\end{equation}
brings Eq. (\ref{H_for_A33}) to an autonomous form 
\begin{eqnarray}
\label{autonomous_form_algebra_A_35}
&& m_{0} (m_{0}-1) g_{w} g_{ww} + 2 (m_{0}^{2} + 4 m_{0} +1 ) g g_{ww} - 10  (m_{0}-1) g g_{w}
\\ \nonumber
&& \ \ \ \ \ \ -(4m_{0}^{2} + m_{0} +4) g_{w}^{2} - (m_{0}-1)^{2} g_{ww} - 4 (m_{0}-1) g_{w} -12 g^{2} - 3g = 0
\end{eqnarray}
Eq. (\ref{autonomous_form_algebra_A_35}) can be reduced to a first--order ODE by a standard trick. We treat $g$ as a variable and we put $g_{w} = Q(g) + \frac{3 g}{1 - m_{0}}$ so $g_{ww} = \left(  Q + \frac{3 g}{1 - m_{0}} \right) \left(  Q_{g} + \frac{3 }{1 - m_{0}} \right)$. Finally
\begin{eqnarray}
\label{autonomous_form_algebra_A_35_first_ODE}
  &&  \left(2 g \left(m_0^3-1\right)-(m_0-1)^3\right) Q Q_g+4\left(1-
   m_0^3\right) Q^2+(m_0-1)^2 m_0 Q^2 Q_g
   \\ \nonumber
 &&  +3 g \left((m_0-1)^2-g
   (m_0+2) (2 m_0+1)\right) Q_g+\left(4 g (m_0+2) (2
   m_0+1)-(m_0-1)^2\right) Q = 0
\end{eqnarray}
Using this equation one can write down the discriminant $\mathcal{D}$ in terms of $g$ and $Q$ in order to apply Remark \ref{uwaga_o_istnieniu}. The result reads
\begin{equation}
    \mathcal{D}(g,Q)=\frac{64 \Lambda ^6 m_0^2 \, P_{3,5}}{27 (m_0-1)^{12} \big(g (m_0+2) (2m_0+1)+(m_0-1) (m_0 (Q-1)+1)\big)^{18}}
\end{equation}
where
\begin{multline}
    P_{3,5}=27 (m_0-1)^{34} (m_0+2)^4 (2 m_0+1)^4 Q^2 (2 g-m_0 Q+Q)^4 \\
    \times\big(3 g (m_0+2) (2 m_0+1)-2
   (m_0-1) (m_0 (m_0+5)+1) Q\big)^2+\dots
\end{multline}
is a complicated polynomial whose total degree in $g$ and $Q$ is $22$. The metric written in terms of $Q(g)$ takes the form
\begin{eqnarray}
\label{the_metricc_algebra_A_35_final}
\frac{1}{2} ds^{2} &=& dqdy-dpdx + \frac{\Lambda}{3} (xdp-ydq)^{2} - \frac{\Lambda}{3} (2m_{0}+1)(2m_{0} +2) \, g x^{2} \, dp^{2}
\\ \nonumber
&& - \frac{\Lambda}{3} \left( \left(  Q + \frac{3 g}{1 - m_{0}} \right) \left(  Q_{g} + \frac{3 }{1 - m_{0}} \right) - 5Q - \frac{15 g}{1 - m_{0}} + 6g \right) (m_{0} x dp - y dq)^{2}
\\ \nonumber
&& +\frac{\Lambda}{3} \left( Q + \frac{3 g}{1 - m_{0}} - 2g \right) \big( - m_{0} (5 m_{0} + 3) x^{2} dp^{2} - 2y^{2} dq^{2} + 2 (3 m_{0} +2) xy dq dp 
\big)
\end{eqnarray}
where $Q (g)$ satisfies Eq. (\ref{autonomous_form_algebra_A_35_first_ODE}) and the relation between $x$, $y$ and $g$ reads
\begin{equation}
\label{algebra_A_35_zwiazek_miedzy_xy_r}
\frac{y^{m_{0}}}{x} = \exp \left(  \int \frac{dg}{Q + \frac{3 g}{1 - m_{0}}} \right)
\end{equation}
The next step is to bring Eq. (\ref{autonomous_form_algebra_A_35_first_ODE}) to an Abel form. First, we treat $Q$ as a new variable which we denote by $s$, while $g$ is now a function of $s$, $g = g(s)$. Note, that $Q_{g} g_{s} =1$. We use also the substitution
\begin{equation}
\label{algebra_A_35_dowolne_m}
(m_{0}+2) (2 m_{0}+1) g =:s^{\frac{3}{4}} \Sigma - (1-m_{0}^{3}) s +\frac{1}{4} (m_{0}-1)^{2}
\end{equation}
where $\Sigma (s)$ is a new function. We arrive at the equation
\begin{eqnarray}
\label{algebra_A_35_dowolne_m_rownanie_A}
\Sigma \Sigma_{s} &=& - \frac{3}{8} (m_{0}-1)^{2} s^{-\frac{7}{4}} \Sigma 
\\ \nonumber
&& + \frac{1}{4} (m_{0}-1)^{3} \left( (m_{0}-1)(m_{0} + 1)^{2} s^{-\frac{1}{2}} + 3 m_{0} s^{-\frac{3}{2}} - \frac{9}{16} (m_{0}-1) s^{-\frac{5}{2}} \right)
\end{eqnarray}
Finally, the substitution 
\begin{equation}
\label{Substitution_algebra_A_35_final}
s =: \frac{(m_{0}-1)^{\frac{8}{3}}}{2^{\frac{4}{3}} r^{\frac{4}{3}}}
\end{equation}
reduces Eq. (\ref{autonomous_form_algebra_A_35_first_ODE}) to the form (\ref{Algebra_A_35_roownanie_final}).

Inserting (\ref{algebra_A_35_dowolne_m}) and (\ref{Substitution_algebra_A_35_final}) into (\ref{the_metricc_algebra_A_35_final}) and (\ref{algebra_A_35_zwiazek_miedzy_xy_r}) one proves  formulas (\ref{metryka_kanoniczna_A_35}) and (\ref{algebra_A_35_zwiazek_xy_r}).
\end{proof}

\subsection{Algebra $A_{3,5}^{m_{0}}$, $m_{0} =  -\frac{1}{2}$}
\label{Algebra_A_35_05_Sekcja}

An algebra $A_{3,5}^{m_{0}}$ with $m_{0} = - \frac{1}{2}$ must be considered separately, because a subtle difference appears in this case. Indeed, the constant $\zeta_{\dot{2}}$ cannot be gauged away anymore (compare the transformation formulas (\ref{transformacje_na_stale_xi})).

\begin{Twierdzenie}
\label{Twierdzenie_A_35_05}
Let $(\mathcal{M}, ds^{2})$ be an Einstein complex space of the type $[\textrm{D}]^{nn} \otimes [\textrm{I}]$ equipped with a 3D algebra of infinitesimal symmetries $A_{3,5}^{-\frac{1}{2}}$. Then there exists a local coordinate system $(q,p,x,w)$ such that the metric takes the form 
\begin{eqnarray}
\label{metryka_we_wspolrzednych_qpxw}
\frac{1}{2} ds^{2} &=& dq \left( \frac{Z_{w}}{x^{2}} \, dw - \frac{2Z}{x^{3}} \, dx \right) -dxdp - x^{2} \left( 2 w + \frac{Z}{Z_{w}} - \frac{\Lambda}{3} + \frac{\zeta_{0}}{Z} \right) dp^{2}
\\ \nonumber
&& - \frac{2Z}{x} \left( 4 w + \frac{2Z}{Z_{w}} + \frac{\Lambda}{3} \right) dqdp - \frac{Z^{2}}{x^{4}} \left( 2 w + \frac{4Z}{Z_{w}} - \frac{\Lambda}{3} \right) dq^{2}
\end{eqnarray}
where $\Lambda \ne 0$ is the cosmological constant, $\zeta_{0}$ is a constant and $Z=Z(w)$ reads
\begin{equation}
\label{ogolne_podstawienie_Z}
Z(w) = w^{-3} (12 w + \Lambda)^{\frac{5}{2}} \, \Sigma(w) + \frac{ 2 \zeta_{0} \left( w + \frac{\Lambda}{3}  \right)}{12 w^{2} + \Lambda w}
\end{equation}
where $\Sigma = \Sigma (w)$ is a holomorphic function which satisfies the equation
\begin{equation}
\label{master_A_35_1_2_wersja_3}
 \Sigma \Sigma_{w} = f_{1} (w) \, \Sigma + f_{0}(w)
\end{equation}
with
\begin{equation}
\label{Definicje_f0_f1_dla_algebray_A35_05}
 f_{1}(w) := \frac{2 \zeta_{0} w \left( 24w^{2} - \frac{5 \Lambda^{2}}{3}  \right)}{(12w + \Lambda)^{\frac{9}{2}}}, \ f_{0} (w) := \frac{12 \zeta_{0}^{2} \, w^{3} \left(w + \frac{\Lambda}{3} \right) \left(w - \frac{\Lambda}{3} \right) (6w + \Lambda)}{(12w + \Lambda)^{8}}
\end{equation}
\end{Twierdzenie}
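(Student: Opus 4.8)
The plan is to follow the route of the preceding subsections, the only essential novelty being that for $A_{3,5}^{-\frac12}$ one has $b_{0}=1$, $n_{0}=a_{0}=0$, $m_{0}=-\frac12$, so in the transformation rules (\ref{transformacje_na_stale_xi}) the coefficient $2m_{0}+b_{0}$ of $H_{\dot 2}$ vanishes; hence $\zeta_{\dot 1}$ can still be gauged to zero but $\zeta_{\dot 2}=:\zeta_{0}$ cannot. After setting $\zeta_{\dot 1}=0$, the master equation (\ref{master_dla_A_33}) becomes $\frac12 x\Theta_{x}-y\Theta_{y}+\Theta=\zeta_{0}y$, whose general solution, obtained by the method of characteristics, expresses the key function through a single arbitrary function $C$ of $u:=x^{2}y$ plus a $\zeta_{0}$-linear particular term, namely $\Theta=yC(u)-\zeta_{0}y\ln y$.

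Substituting this $\Theta$ into the reduced $\mathcal{HH}$-equation (\ref{rownanie_hiperniebianskie_dwa_komutujace_Killingi}) is the next step, and here the value $m_{0}=-\frac12$ behaves specially: the ``weight'' $2(1+m_{0})=1$ is exactly the one for which all undifferentiated occurrences of $C$ disappear (the Euler combination $x\Theta_{x}+y\Theta_{y}-\Theta$ applied to $yC(u)$ carries no $C$-term, and $\partial_{y}^{2}y=0$ kills the prefactor contribution to $\Theta_{yy}$). Thus, unlike in Theorem \ref{Twierdzenie_A_35}, no passage to an autonomous second-order form is needed: the $\mathcal{HH}$-equation collapses at once to a first-order ODE for $V:=C_{u}$, schematically
\[
u\,V_{u}\,(6uV+4\zeta_{0}+3\Lambda u)= -\big(12uV^{2}+\Lambda uV+2\zeta_{0}V+\tfrac23\Lambda\zeta_{0}\big).
\]

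To reach the Abel form (\ref{master_A_35_1_2_wersja_3}) I would then make two moves. First a hodograph: take $V$, renamed $w$, as independent variable and $u=:Z(w)$ as the unknown, which turns the above into an equation rational in $Z$, with a term linear in $Z$ multiplying $Z_{w}$. Second, the substitution (\ref{ogolne_podstawienie_Z}): its additive part $\tfrac{2\zeta_{0}(w+\Lambda/3)}{w(12w+\Lambda)}$ is a particular solution of the $Z$-equation chosen to remove the stray $Z_{w}$-term, while the weight $w^{-3}(12w+\Lambda)^{5/2}$ is the integrating factor that kills the $\Sigma^{2}$-term; what remains is precisely (\ref{master_A_35_1_2_wersja_3}) with $f_{0},f_{1}$ given by (\ref{Definicje_f0_f1_dla_algebray_A35_05}).

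It then remains to change coordinates $(q,p,x,y)\to(q,p,x,w)$ by $y=Z(w)/x^{2}$, i.e. $u=Z(w)$, so that the cross term $dy\,dq=dq\,d\!\big(Z(w)/x^{2}\big)=\tfrac{Z_{w}}{x^{2}}dw\,dq-\tfrac{2Z}{x^{3}}dx\,dq$ comes out as in (\ref{metryka_we_wspolrzednych_qpxw}), and to re-express $\Theta_{xx},\Theta_{xy},\Theta_{yy}$, which along the solution are explicit in $V=w$, in $uC_{uu}=Z/Z_{w}$ and in the $\zeta_{0}/Z$-term inherited from $-\zeta_{0}y\ln y$; this produces the stated metric. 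The hard part is the middle step: guessing the correct pair of substitutions (which particular solution to peel off, which power-weight to divide by) and carrying the constant $\zeta_{0}$ consistently through the hodograph and the rescaling — this is what makes the case heavier than $m_{0}\neq-\frac12$, and note that $\zeta_{0}=0$ would degenerate (\ref{master_A_35_1_2_wersja_3}) to $\Sigma\Sigma_{w}=0$. As in the other subsections one can moreover confirm that algebraic nondegeneracy is attainable by expressing $\mathcal{D}$ in the variables $(w,\Sigma)$ and applying Remark \ref{uwaga_o_istnieniu}; in this case it is in any event witnessed by the explicit solution of (\ref{master_A_35_1_2_wersja_3}) given subsequently.
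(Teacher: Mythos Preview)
Your approach is essentially identical to the paper's: solve the master equation as $\Theta=yF(x^{2}y)+\zeta_{0}\,y\ln y$, observe that the reduced $\mathcal{HH}$-equation is already first order in $\Omega:=F_{z}$, perform the hodograph $w=\Omega$, $z=Z(w)$, and then apply the substitution (\ref{ogolne_podstawienie_Z}) to reach Abel form. Two minor points: your convention $\zeta_{\dot 2}=:\zeta_{0}$ is opposite to the paper's $\zeta_{\dot 2}=:-\zeta_{0}$, so to reproduce the signs in (\ref{metryka_we_wspolrzednych_qpxw})--(\ref{Definicje_f0_f1_dla_algebray_A35_05}) you must flip $\zeta_{0}$ throughout; and the additive piece in (\ref{ogolne_podstawienie_Z}) is \emph{not} a particular solution of the $Z$-equation --- its role is precisely what you say second, to absorb the inhomogeneous term multiplying $Z_{w}$ so that the coefficient of the derivative becomes the new unknown, while your description of the multiplicative weight as the integrating factor that removes the $\Sigma^{2}$-term is correct.
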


\begin{proof}
Without any loss of generality one puts $\zeta_{\dot{1}} = 0$ and $\zeta_{\dot{2}} = : -\zeta_{0} $. A general solution to the master equation reads
\begin{equation}
\label{Theta_for_m0_minus_12}
 \Theta = y \, F(z) + \zeta_{0} \, y \ln y , \ z= y x^{2}
\end{equation}
where $F=F(z)$ is an arbitrary function. The second derivatives of $\Theta$ with respect to $x$ and $y$ read
\begin{equation}
\Theta_{xx} = y^{2} (2 \Omega + 4z \Omega_{z}), \ \Theta_{xy} = xy (4 \Omega +2 z \Omega_{z}), \ \Theta_{yy} = x^{2} \left( 2 \Omega + z \Omega_{z} + \frac{\zeta_{0}}{z} \right)
\end{equation}
where we denoted $\Omega (z) := F_{z}$. In coordinates $(q,p,x,z)$ the metric takes the form
\begin{eqnarray}
\label{Algebra_A_35_05_pomocnicze_1}
\frac{1}{2} ds^{2} &=& dq \left( \frac{1}{x^{2}} \, dz - \frac{2z}{x^{3}} \, dx \right) -dxdp - x^{2} \left( 2 \Omega + z \Omega_{z} - \frac{\Lambda}{3} + \frac{\zeta_{0}}{z} \right) dp^{2}
\\ \nonumber
&& - \frac{2z}{x} \left( 4 \Omega + 2z \Omega_{z} + \frac{\Lambda}{3} \right) dqdp - \frac{z^{2}}{x^{4}} \left( 2 \Omega + 4z \Omega_{z} - \frac{\Lambda}{3} \right) dq^{2}
\end{eqnarray}
The $\mathcal{HH}$-equation yields
\begin{equation}
\label{master_A_35_1_2}
6z^{2} \Omega \Omega_{z} + 12 z \Omega^{2} - 2 \zeta_{0} (\Omega + 2z \Omega_{z}) + \Lambda \left( 3z^{2} \Omega_{z} + z \Omega - \frac{2}{3} \zeta_{0} \right) =0
\end{equation}
Spaces resulting from this equation can be algebraically nondegenerate as the discriminant $\mathcal{D}$ in variables $z$ and $\Omega$ reads
\begin{equation}
    \mathcal{D}(z,\Omega)=\frac{2359296 \Lambda ^2 (\zeta_{0} +3 z\Omega)^2}{(-4 \zeta_{0} +3\Lambda  z+6 z\Omega)^{16}} \widetilde{P}_{3,5}
\end{equation}
where
\begin{equation}
 \widetilde{P}_{3,5} = 67108864 \Lambda^4 \zeta_{0}^{14}+\dots +4782969000000  \Lambda^{4} z^{14} \Omega^{14}
\end{equation}
is a polynomial in $z$ and $\Omega$. To reduce Eq. (\ref{master_A_35_1_2}) to an Abel equation we treat $\Omega$ as a new variable while $z$ we treat as a function of $\Omega$. Let us denote  $\Omega =: w$ and $z=Z(w)$. Hence, $1 = Z_{w} \Omega_{z}$ and (\ref{metryka_we_wspolrzednych_qpxw}) is proved.

Eq. (\ref{master_A_35_1_2}) reduces to the form
\begin{equation}
\label{master_A_35_1_2_wersja_2}
Z_{w} \left( Z -  \frac{ 2 \zeta_{0} \left( w + \frac{\Lambda}{3}  \right)}{12 w^{2} + \Lambda w} \right) + \frac{6w + 3 \Lambda}{12 w^{2} + \Lambda w} \, Z^{2} - \frac{4 \zeta_{0}}{12 w^{2} + \Lambda w} \, Z=0
\end{equation} 
Using the substitution (\ref{ogolne_podstawienie_Z}) with the definitions (\ref{Definicje_f0_f1_dla_algebray_A35_05}) one proves (\ref{master_A_35_1_2_wersja_3}). 
\end{proof}

\begin{Uwaga}
\normalfont
It is probably worth noting that the number of arbitrary constants in Eq.~(\ref{master_A_35_1_2_wersja_3}) can be reduced to one by the following substitution 
\begin{equation}
    w=\Lambda v,  \quad \zeta_0=\Lambda \beta_0, \quad \Sigma = \Lambda^{\frac{1}{2}} \widetilde{\Sigma}
\end{equation}
which eliminates $\Lambda$.
\end{Uwaga}

No solutions of Eq. (\ref{master_A_35_1_2_wersja_3}) with $\zeta_{0} \ne 0$ are known. The situation changes if $\zeta_{0} =0$. We analyze this case in Section \ref{Sekcja_5}.

\subsection{Algebra $A_{3,7}^{\alpha_{0}}$}
\label{Sekcja_Algebra_A_37}

For the algebra $A_{3,7}^{\alpha_{0}}$ we have $n_{0} = -1$, $a_{0}=1$ while $b_{0}=m_{0}=: \alpha_{0}$ where $\alpha_{0}$ is a constant such that $\alpha_{0} > 0$.

\begin{Twierdzenie}
\label{Twierdzenie_A_37}
Let $(\mathcal{M}, ds^{2})$ be an Einstein complex space of the type $[\textrm{D}]^{nn} \otimes [\textrm{I}]$ equipped with a 3D algebra of infinitesimal symmetries $A_{3,7}^{\alpha_{0}}$. Then there exists a local coordinate system $(q,p,x,y)$ such that the metric takes the form 
\begin{eqnarray}
\label{metryka_algebra_A37_wersja_3}
\frac{1}{2} ds^{2} &=& dqdy-dpdx + \frac{\Lambda}{3} \, \frac{s (3 +16 \alpha_{0}^{2} s)}{1 + G +  s + \alpha_{0}^{2} (s + 9G)} (xdq + y dp)^{2} 
\\ \nonumber
&& + \frac{\Lambda}{3} (1+s+G) (ydq - x dp)^{2}
\\ \nonumber
&& + \frac{2\Lambda \alpha_{0}}{3} \,  \frac{s (-1-4G-4s +12 \alpha_{0}^{2} (s-3G))}{1 + G +  s + \alpha_{0}^{2} (s + 9G)}  (y dq - xdp)(ydp + xdq)
\\ \nonumber
&& + \frac{\Lambda \alpha_{0}^{2}}{3} \,  \frac{-4s - 7s^{2} +9G + 9G^{2} +2sG + 9 \alpha_{0}^{2} (s - 3G )^{2}}{1 + G +  s + \alpha_{0}^{2} (s + 9G)}   (y dq - xdp)^{2}
\end{eqnarray}
where $\Lambda \ne 0$ is the cosmological constant, $\alpha_{0} >0$ is a constant, $s$ and $G$ read
\begin{equation}
\label{wspolrzedne_algebra_A37}
s =: \frac{1}{2^{\frac{4}{3}} r^{\frac{4}{3}}}, \  G =-\dfrac{ \dfrac{\Sigma}{2r} + \dfrac{1}{4} + \dfrac{1-3 \alpha_{0}^{2}}{2^{\frac{4}{3}} r^{\frac{4}{3}}}}{1 + 9 \alpha_{0}^{2}}
\end{equation}
where $\Sigma = \Sigma (r)$ is a holomorphic function which satisfies the equation
\begin{equation}
\label{rownanie_algebra_A_37_wersja_3}
\Sigma \Sigma_{r} - \Sigma = \left(  \frac{3}{4}r + 2^{\frac{2}{3}} r^{-\frac{1}{3}} \right) \left( 1 + \frac{\alpha_{0}^{2}}{3} 2^{\frac{8}{3}} r^{-\frac{4}{3}} \right)
\end{equation}
with
\begin{equation}
\label{zaleznosc_xy_i_r_A37}
(x^{2} + y^{2}) e^{2 \alpha_{0} \atan \left( \frac{x}{y} \right)}  = 
r^{\frac{2}{3}} \exp \left( - \int  \left( 1 +  \frac{2^{\frac{7}{2}} \alpha_{0}^{2}}{3r^{\frac{4}{3}} } \right) \frac{dr}{\Sigma}  \right)
\end{equation}
\end{Twierdzenie}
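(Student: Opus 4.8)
Since the 2D subalgebra $\{K_1,K_2\}$ can be brought by Lemma \ref{lemat_o_drugim_Killingu} to $K_1=\partial_q$, $K_2=\partial_p$ with $\Theta=\Theta(x,y)$, and the discussion of Section \ref{three_Killings} then fixes $K_3$ together with the structure constants $n_0=-1$, $a_0=1$, $b_0=m_0=\alpha_0$ of Table \ref{Lie_Algebras_table_2}, the plan is to follow, step for step, the scheme already used for $A_{3,4}$ and $A_{3,5}^{m_0}$. First I would feed these constants into (\ref{transformacje_na_stale_xi}): the relevant $2\times 2$ matrix has determinant $1+9\alpha_0^{2}\ne 0$, so both constants $\zeta_{\dot A}$ are gauged away for every admissible $\alpha_0$. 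The master equation (\ref{master_dla_K3}) then collapses to the homogeneous linear PDE $(y-\alpha_0 x)\Theta_x-(x+\alpha_0 y)\Theta_y+4\alpha_0\Theta=0$, whose characteristic system is a logarithmic spiral; integrating it gives $\Theta=(x^2+y^2)^{2}\,F(z)$, where $(x^2+y^2)^2$ carries the weight prescribed by the master equation and $z$ is the $K_3$-invariant $z=\sqrt{x^2+y^2}\,e^{-\alpha_0\atan(y/x)}$; note that $z^2$ is, up to a constant, the combination $(x^2+y^2)\,e^{2\alpha_0\atan(x/y)}$ occurring in (\ref{zaleznosc_xy_i_r_A37}).

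Next I would compute $\Theta_{xx},\Theta_{xy},\Theta_{yy}$ and substitute them into the reduced $\mathcal{HH}$-equation (\ref{rownanie_hiperniebianskie_dwa_komutujace_Killingi}). Because of the similarity structure everything collapses to a single second-order ODE for $F(z)$; passing to $w:=\ln z$ and writing $F=\frac{\Lambda}{3}z^{k}g(w)$ with $k$ chosen so that the explicit $z$-dependence disappears makes it autonomous, and the usual order-lowering substitution $g_w=Q(g)+(\text{a suitable multiple of }g)$, $g_{ww}=(Q_g+\dots)(Q+\dots)$ reduces it to a first-order ODE for $Q(g)$ of the same type as (\ref{HH_equation_for_A_34_formalizm_Q}). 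Using that relation one rewrites the discriminant $\mathcal{D}$ as a function $\mathcal{D}(g,Q)$ --- a high-degree polynomial divided by a power of the linear combination of $g$ and $Q$ appearing in the denominators of (\ref{metryka_algebra_A37_wersja_3}) --- and checks that it does not vanish identically, so Remark \ref{uwaga_o_istnieniu} secures the existence of an algebraically general solution.

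Finally, to reach the canonical Abel form I would treat $Q$ as the new independent variable, $g$ as a function of it, then change to a variable $r$ by $Q=(2^{4/3}r^{4/3})^{-1}$ and set $g$ linear in a new unknown $\Sigma(r)$ with $r$-dependent coefficients tuned (exactly as in Section \ref{Algebra_A34}) so that the equation becomes an Abel equation of the second kind; this should yield (\ref{rownanie_algebra_A_37_wersja_3}), which indeed degenerates to (\ref{HH_equation_for_A_34_formalizm_Sigma}) at $\alpha_0=0$. Inserting the resulting expressions (\ref{wspolrzedne_algebra_A37}) for $s$ and $G$ into (\ref{metryka_w_xyqp}) gives (\ref{metryka_algebra_A37_wersja_3}), while integrating $w=\int dg/(Q+\dots)$ and re-expressing $w$ and $z$ through $x,y$ gives (\ref{zaleznosc_xy_i_r_A37}). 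The main obstacle is not conceptual but computational: every intermediate object drags along the parameter $\alpha_0$ and the non-monomial invariant $z$, so the reduction to autonomous form, the evaluation of $\mathcal{D}(g,Q)$, and above all the discovery of the substitution that brings the right-hand side of the Abel equation to the form stated in the theorem must all be done with computer algebra, and the polynomials involved --- the discriminant in particular --- are bulky.
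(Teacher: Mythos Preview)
Your proposal is correct and follows essentially the same route as the paper's proof. The only cosmetic difference is your choice of weight factor: you write $\Theta=(x^{2}+y^{2})^{2}F(z)$ with $z=\sqrt{x^{2}+y^{2}}\,e^{-\alpha_0\atan(y/x)}$, whereas the paper writes $\Theta=e^{-4\alpha_0\atan(x/y)}H(z)$ with $z=(x^{2}+y^{2})e^{2\alpha_0\atan(x/y)}$; since $(x^{2}+y^{2})^{2}=z^{2}e^{-4\alpha_0\atan(x/y)}$ in the paper's variable, the two ans\"atze are related by $H=z^{2}F$ and your $z$ is the paper's $v=z^{1/2}$ up to a constant, so all subsequent substitutions ($T=v^{4}g$, $g_w=Q-3g$, then $Q\to s\to r$) coincide.
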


\begin{proof}
The constants $\zeta_{\dot{A}}$ can be gauged away. A general solution to the master equation reads
\begin{equation}
\Theta (x,y) = e^{- 4 \alpha_{0} \atan \left( \frac{x}{y} \right)  } H(z), \ \ z:= (x^{2} + y^{2}) \, e^{ 2 \alpha_{0} \atan \left( \frac{x}{y} \right)}
\end{equation}
This solution implies the following form of the second derivatives of the key function $\Theta$ 
\begin{eqnarray}
\label{Theta_ssecond_dderivatives}
\Theta_{yy} &=&  4 (y- \alpha_{0} x)^{2} H_{zz} +  \frac{2}{z}  (x^{2} + y^{2} + 6 \alpha_{0} x (y - \alpha_{0}x))H_{z} + 8 \alpha_{0} \frac{x}{z^{2}}  (2 \alpha_{0} x - y)H  \ \ \ \ 
\\ \nonumber
\Theta_{xx} &=&  4 (x+ \alpha_{0} y)^{2} H_{zz} +  \frac{2}{z}  (x^{2} + y^{2} - 6 \alpha_{0} y (x + \alpha_{0}y)) H_{z}+ 8 \alpha_{0} \frac{y}{z^{2}}  (2 \alpha_{0} y+x)H  \ \ \ \ 
\\ \nonumber
\Theta_{xy} &=& 4 (x+ \alpha_{0} y) (y- \alpha_{0} x)  H_{zz} +  \frac{6 \alpha_{0}}{z} (x^{2} - y^{2} + 2 \alpha_{0} xy )  H_{z} 
\\ \nonumber
&& + \frac{4 \alpha_{0}}{z^{2}}  (y^{2} - x^{2} - 4 \alpha_{0} xy)  H
\end{eqnarray}
The $\mathcal{HH}$-equation takes the form
\begin{eqnarray}
\label{HH_equation_for_algebra_A37}
(4-60 \alpha_{0}^{2}) H_{z}^{2} + 8 (1 + \alpha_{0}^{2}) z H_{z} H_{zz} + 32 \alpha_{0}^{2} H H_{zz} + 80 \alpha_{0}^{2} \frac{1}{z} H H_{z}
\\ \nonumber
-16 \alpha_{0}^{2} \frac{1}{z^{2}} H^{2} - \frac{\Lambda}{3} ( 4z^{2} H_{zz} - 4z H_{z} + 3H ) = 0
\end{eqnarray}
If we introduce the substitution
\begin{equation}
H (z) = - \frac{\Lambda}{3} \, T(v), \ v = z^{\frac{1}{2}}
\end{equation}
then Eq. (\ref{HH_equation_for_algebra_A37}) simplifies slightly and it reads
\begin{eqnarray}
\label{HH_equation_formalizm_T_algebra_A37}
&& 16 \alpha_{0}^{2} \left( \frac{1}{2v} TT_{vv} + \frac{2}{v^{2}} T T_{v} - \frac{1}{v^{3}} T^{2} - \frac{1}{v} T_{v}^{2} \right)
\\ \nonumber
&& \ \ \ \ \ \ \ + (1 + \alpha_{0}^{2} ) \, T_{v} T_{vv} + v^{3} T_{vv} - 3 v^{2} T_{v} + 3vT = 0
\end{eqnarray}
The metric expressed in terms of $T(v)$ yields
\begin{eqnarray}
\label{metryka_algebra_A37_wersja_2}
\frac{1}{2} ds^{2} &=& dqdy-dpdx + \frac{\Lambda}{3} \frac{T_{vv}}{v^{2}} (xdq + y dp)^{2} + \frac{\Lambda}{3} \left( \frac{T_{v}}{v^{3}} + 1 \right) (ydq - x dp)^{2}
\\ \nonumber
&& + \frac{2}{3} \alpha_{0} \Lambda \left( \frac{T_{vv}}{v^{2}} - \frac{4 T_{v}}{v^{3}} + \frac{4T}{v^{4}} \right) (y dq - xdp)(ydp + xdq)
\\ \nonumber
&& + \frac{\Lambda}{3} \alpha_{0}^{2} \left( \frac{T_{vv}}{v^{2}} - \frac{7 T_{v}}{v^{3}} + \frac{16T}{v^{4}} \right) (y dq - xdp)^{2}
\end{eqnarray}
Eq. (\ref{HH_equation_formalizm_T_algebra_A37}) can be reduced to a first--order ODE. We follow the pattern from Section \ref{Algebra_A34}. The substitution $T = v^{4} g (w )$, $w := \ln v$ gives 
\begin{equation}
\begin{split}
  3 g + 48 g^2 + 4 g_w  + 40 g g_w  + 7  g_w^2 +  g_{ww}  + 4 g g_{ww}  +  g_w g_{ww} \\
+ \alpha_{0}^{2} ( g_{w} g_{ww} + 12 g g_{ww} - 9 g_{w}^{2}  )= 0
\end{split}
\end{equation}
Then the substitution $g_w = Q-3g$, $g_{ww} = (Q_g-3)(Q-3g)$ leads to
\begin{equation}
\label{rownanie_algebra_A_37}
Q_g ( Q -3 g ) (1 + g +  Q + \alpha_{0}^{2} (Q + 9g)) + Q (1 + 4 g + 4 Q  - 12 \alpha_{0}^{2} (Q-3g)  ) = 0
\end{equation}
At this point one can calculate the discriminant $\mathcal{D}$ which turns out to be a complicated rational function of $g$ and $Q$
\begin{equation}
\label{wyroznik_dla_A37}
\mathcal{D}(g,Q)=\frac{64 \Lambda ^6 P_{3,7}}{27 \left(\alpha_0 ^2 (9
   g+Q)+g+Q+1\right)^{18}}    
\end{equation}
where
\begin{equation}
    P_{3,7}=27 \left(\alpha_0^2+1\right)^2 \left(9 \alpha_0 ^2+1\right)^4 Q^2
   (Q-2 g)^4 \left(2 \left(7 \alpha_0^2+3\right) Q-3 \left(9
   \alpha_0^2+1\right) g\right)^2+\dots
\end{equation}
is a polynomial of degree $22$ in $g$, $Q$. Derivatives of function $T(v)$ expressed in terms of $Q$ and $g$ read
\begin{equation}
\frac{T_{v}}{v^{3}} = Q + g, \ \frac{T_{vv}}{v^{2}} = \frac{(3 + 16 \alpha_{0}^{2} Q)Q}{1 + g + Q +\alpha_{0}^{2} (Q + 9g)}
\end{equation}
Now we treat $Q$ as a variable which we denote by $s$ and $g$ as a function of $s$, $g=G(s)$. Because $Q_{g} G_{s} = 1$ Eq. (\ref{rownanie_algebra_A_37}) takes the form
\begin{equation}
\label{rownanie_algebra_A_37_wersja_2}
 ( s -3 G ) (1 + G +  s + \alpha_{0}^{2} (s + 9G)) + s (1 + 4 G + 4 s  - 12 \alpha_{0}^{2} (s-3G)  ) \, \frac{dG}{ds} = 0
\end{equation}
and we arrive at the metric in the form (\ref{metryka_algebra_A37_wersja_3}). 
The relation between $x$, $y$ and $s$ is given by the formula
\begin{equation}
\label{relacja_miedzy_xy_i_s}
(x^{2} + y^{2}) e^{2 \alpha_{0} \atan \left( \frac{x}{y} \right)} = \exp \left( 2 \int \frac{G_{s}}{s - 3G} \, ds \right)
\end{equation}
The last step consists of the substitution (\ref{wspolrzedne_algebra_A37}) which brings Eq. (\ref{rownanie_algebra_A_37_wersja_2}) to the form (\ref{rownanie_algebra_A_37_wersja_3}) and the relation (\ref{relacja_miedzy_xy_i_s}) to the form (\ref{zaleznosc_xy_i_r_A37}).
\end{proof}

\begin{Uwaga}
\normalfont
Note that relation (\ref{zaleznosc_xy_i_r_A37}) cannot be solved for $x = x(y,r)$ or $y=y(x,r)$ in terms of elementary functions, regardless of its right-hand side. Hence, the metric cannot be explicitly written in terms of the coordinates $(q,p,x,r)$ or $(q,p,r,y)$ in this case.
\end{Uwaga}

\begin{Uwaga}
\normalfont
Note, that the substitution (\ref{wspolrzedne_algebra_A37}) makes sense if $1 + 9 \alpha_{0}^{2} \ne 0$. The case $1 + 9 \alpha_{0}^{2} = 0$ must be considered separately. However, within the real neutral case the constant $\alpha_{0}$ is real and the relation $1 + 9 \alpha_{0}^{2} = 0$ cannot be satisfied. Within the complex case we have $\alpha_{0} = \pm \frac{i}{3}$. However, the complex transformation of the algebra generators reduces this case to an algebra $A_{3,5}^{-\frac{1}{2}}$ which has been already analyzed in Section \ref{Algebra_A_35_05_Sekcja}.
\end{Uwaga}

Unfortunately, no solution to the Abel equation (\ref{rownanie_algebra_A_37_wersja_3}) is explicitly known. Nevertheless, the form of the discriminant (\ref{wyroznik_dla_A37}), although inconvenient, ensures existence of a corresponding nondegenerate space.

\subsection{Algebra $A_{3,6}$}

The case with $n_{0} = -1$, $a_{0}=1$, $b_{0}=m_{0}=0$ corresponds to an algebra $A_{3,6}$. It appears that the considerations are the same as in Section \ref{Sekcja_Algebra_A_37} with $\alpha_{0} = 0$.

\begin{Twierdzenie}
\label{Twierdzenie_A_36}
Let $(\mathcal{M}, ds^{2})$ be an Einstein complex space of the type $[\textrm{D}]^{nn} \otimes [\textrm{I}]$ equipped with a 3D algebra of infinitesimal symmetries $A_{3,6}$. Then there exists a local coordinate system $(q,p,x,y)$ such that the metric takes the form 
\begin{eqnarray}
\label{metryka_kanoniczna_A_36}
\frac{1}{2} ds^{2}&=& dq dy - dp dx +  \frac{\Lambda}{2^{\frac{4}{3}}} r^{-\frac{4}{3}} \left( \frac{3}{4} - \frac{\Sigma}{2r } \right)^{-1}  (y dp + x dq)^{2}
\\ \nonumber
 && \ \ \ \ \ \ \ \ \ \ \ \ \ \ \ 
+ \frac{\Lambda}{3} \left( \frac{3}{4} - \frac{\Sigma}{2r } \right) (y dq - x dp)^{2}
 \end{eqnarray}
where $\Lambda \ne 0$ is the cosmological constant, $\Sigma = \Sigma (r)$ is a holomorphic function which satisfies the equation
\begin{equation}
\label{HH_equation_for_A_36_formalizm_Sigma}
 \Sigma \Sigma_{r} - \Sigma = \frac{3}{4}r + 2^{\frac{2}{3}} r^{-\frac{1}{3}}
\end{equation}
and
\begin{equation}
\label{algebra_A_36_zwiazek_xy_r}
x^{2} + y^{2} =  r^{\frac{2}{3}} \exp \left( -  \int \frac{dr}{\Sigma} \right)
\end{equation}
\end{Twierdzenie}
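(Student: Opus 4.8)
The plan is to run the reduction of Section~\ref{Sekcja_Algebra_A_37} in the degenerate case $\alpha_{0}=0$, since the structure constants $b_{0}=m_{0}=0$, $n_{0}=-1$, $a_{0}=1$ of $A_{3,6}$ are exactly those of $A_{3,7}^{\alpha_{0}}$ at $\alpha_{0}=0$. First I would check, from (\ref{transformacje_na_stale_xi}), that for these constants both $\zeta_{\dot A}$ can be gauged away ($\zeta'_{\dot1}=\zeta_{\dot1}+H_{\dot2}$, $\zeta'_{\dot2}=\zeta_{\dot2}-H_{\dot1}$). The master equation (\ref{master_dla_K3}) then reduces to $y\Theta_{x}-x\Theta_{y}=0$, whose general solution is $\Theta=H(z)$ with $z:=x^{2}+y^{2}$ --- the $\alpha_{0}\to0$ limit of the $A_{3,7}$ ansatz --- and the second derivatives of $\Theta$ are obtained by putting $\alpha_{0}=0$ in (\ref{Theta_ssecond_dderivatives}).

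Substituting these into the $\mathcal{HH}$-equation (\ref{rownanie_hiperniebianskie_dwa_komutujace_Killingi}) and setting $H(z)=-\tfrac{\Lambda}{3}T(v)$, $v:=z^{1/2}$, gives equation (\ref{HH_equation_formalizm_T_algebra_A37}) with $\alpha_{0}=0$, namely $T_{v}T_{vv}+v^{3}T_{vv}-3v^{2}T_{v}+3vT=0$, which is precisely (\ref{HH_equation_for_A_34_formalizm_T}). From here the reduction copies the one performed for $A_{3,4}$ in Section~\ref{Algebra_A34}: set $T=v^{4}g(w)$ with $w:=\ln v$, then $g_{w}=Q(g)-3g$, arriving at the first-order equation (\ref{HH_equation_for_A_34_formalizm_Q}); finally the substitution $Q=\tfrac{1}{2^{4/3}r^{4/3}}$, $g=-\tfrac{\Sigma}{2r}-\tfrac{1}{2^{4/3}r^{4/3}}-\tfrac{1}{4}$ casts it into the Abel equation (\ref{HH_equation_for_A_36_formalizm_Sigma}). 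Expressing $T_{v}/v^{3}$ and $T_{vv}/v^{2}$ through $Q$ and $g$ and inserting into (\ref{metryka_w_xyqp}) produces (\ref{metryka_kanoniczna_A_36}); the only difference from $A_{3,4}$ is bookkeeping, for with $z=x^{2}+y^{2}$ the curvature-carrying quadratic differentials organise as $(x\,dq+y\,dp)^{2}$ and $(y\,dq-x\,dp)^{2}$, which one reads off from (\ref{metryka_algebra_A37_wersja_2}) at $\alpha_{0}=0$, while the relation (\ref{algebra_A_36_zwiazek_xy_r}) follows from $w=\int\frac{dg}{Q-3g}$ together with the substitution, just as (\ref{zaleznosc_xy_i_r_A37}) does at $\alpha_{0}=0$.

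Existence of an algebraically nondegenerate representative comes from Remark~\ref{uwaga_o_istnieniu}: computing the discriminant $\mathcal{D}$ from the fourth $(x,y)$-derivatives of $\Theta=H(x^{2}+y^{2})$ via (\ref{wspolczynniki_ASD}) and eliminating derivatives with (\ref{HH_equation_for_A_34_formalizm_Q}) gives the $\alpha_{0}=0$ specialisation of (\ref{wyroznik_dla_A37}), which is not identically zero, so the Peano theorem supplies a solution through any regular point with $\mathcal{D}\neq0$. I expect no conceptual obstacle: the specialisation is clean, and the exceptional locus $1+9\alpha_{0}^{2}=0$ that needs separate treatment in $A_{3,7}$ does not occur here. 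The only mildly tedious points are checking that each $\alpha_{0}$-dependent intermediate formula of Section~\ref{Sekcja_Algebra_A_37} has a regular limit at $\alpha_{0}=0$ (it does --- the dependence is polynomial apart from the factor $e^{-4\alpha_{0}\atan(x/y)}\to1$) and recomputing $\mathcal{D}(g,Q)$ directly, since, although $A_{3,4}$ and $A_{3,6}$ share the same Abel equation, they correspond to different metrics and hence to genuinely different discriminant functions.
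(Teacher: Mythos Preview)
Your proposal is correct and follows essentially the same route as the paper: specialise the $A_{3,7}^{\alpha_{0}}$ reduction to $\alpha_{0}=0$, observe that the resulting ODE coincides with (\ref{HH_equation_for_A_34_formalizm_Q}) from the $A_{3,4}$ case, and read off the metric and the relation (\ref{algebra_A_36_zwiazek_xy_r}). One small correction to your closing remark: upon computing $\mathcal{D}(g,Q)$ you would find it is \emph{identical} to (\ref{wyroznik_dla_algebry_A34_Qg}), not merely analogous --- the paper notes this explicitly --- so your expectation of ``genuinely different discriminant functions'' is off, though the computation you propose would of course reveal this.
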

\begin{proof}
It is enough to put $\alpha_{0}=0$ in the Theorem \ref{Twierdzenie_A_37}. Notice that this reduces equation (\ref{rownanie_algebra_A_37}) to the form (\ref{HH_equation_for_A_34_formalizm_Q}) found in the case of algebra $A_{3,4}$. It turns out that despite the different dependence of the key function on the variables $x$ and $y$, the discriminant $\mathcal{D}$ is also given by the formula (\ref{wyroznik_dla_algebry_A34_Qg}), with the same consequences on determining an algebraic type in the real neutral case. 
\end{proof}

\begin{Uwaga}
\normalfont
Unlike the case $A_{3,7}^{\alpha_{0}}$ the relations $x=x(y,r)$ or $y=y(x,r)$ can be explicitly obtained (compare (\ref{algebra_A_36_zwiazek_xy_r})). Hence, the metric can be explicitly written in terms of the coordinates $(q,p,x,r)$ or $(q,p,r,y)$. 
\end{Uwaga}

\begin{Uwaga}
\normalfont
Despite many similarities between algebras $A_{3,4}$ and $A_{3,6}$ the metrics in these cases are different. Note a subtle difference in the factors standing at $... \left( \frac{3}{4} - \frac{\Sigma}{2r } \right)^{-1}$ term. Not only the sign is different. In an algebra $A_{3,4}$ there is a factor $(y dq + x dp)^{2}$ while in an algebra $A_{3,6}$ the same factor reads $(y dp + x dq)^{2}$.
\end{Uwaga}


\section{Example}
\label{Sekcja_5}
\setcounter{equation}{0}

In this Section we focus on a space equipped with an algebra $A_{3,5}^{-\frac{1}{2}}$ with an additional assumption  $\zeta_{0} = 0$. From Theorem \ref{Twierdzenie_A_35_05} one finds that $\zeta_{0} = 0$ implies $f_{1} = f_{0} = 0$. Eq. (\ref{master_A_35_1_2_wersja_3}) can be immediately solved yielding $\Sigma  = \textrm{const}$. Finally, 
\begin{equation}
\label{rozwiazanie_A35_alpha_zero}
 Z(w) = z_{0} w^{-3} (12 w + \Lambda)^{\frac{5}{2}}
\end{equation}
where $z_{0}$ is a nonzero constant. The metric (\ref{metryka_we_wspolrzednych_qpxw}) can be written explicitly and it depends on two parameters $\Lambda \ne 0$ and $z_{0} \ne 0$
\begin{eqnarray}
\label{jedyna_jawna_metryka}
\frac{1}{2} ds^{2} &=& -\frac{z_0 (12 w + \Lambda)^{\frac{3}{2}}}{w^4 x^3}  \Big( 3 x (2w+\Lambda) \, dw + 2 w (12w+\Lambda) \, dx \Big) dq -dxdp
\\ \nonumber
&&  - x^{2} \left(\frac{3w - \Lambda}{2w+\Lambda} \right) \frac{\Lambda}{3} dp^{2} - \frac{2 z_0 }{w^3 x} \frac{(12w+\Lambda)^{\frac{7}{2}}}{(2w+\Lambda)} \frac{\Lambda}{3}dqdp
\\ \nonumber
&& + \frac{z_0^{2}}{3 w^6 x^{4}} \frac{(12w+\Lambda)^5 (36 w^2 +\Lambda^2)}{(2w+\Lambda)} dq^{2}
\end{eqnarray}

Since it is the only case we have managed to solve completely, it deserves more thorough analysis. The formulas for $\mathcal{D}$, $\mathcal{P}$ and $\mathcal{R}$ read
\begin{subequations}
\begin{eqnarray}
\mathcal{D} &=& \frac{2^{12}10^{6} \Lambda^{6} w^{12} (\Lambda + 12 w)^{2} (\Lambda^{2} -196 \Lambda w + 4 w^{2})  }{(2 w + \Lambda)^{16}}
\\
 \mathcal{P} &=& - \frac{51200 \Lambda  z_0^2 (\Lambda +12 w)^6 }{27 w^2 x^6 (\Lambda +2 w)^8}
 \\ \nonumber
   &&  \quad \times  \left(35 \Lambda^4-1512 \Lambda^2  w^2-1728 \Lambda  w^3+432 w^4+400 \Lambda^3 w\right)
   \\ 
   \mathcal{R}&=& \frac{163840000 \Lambda ^2 z_0^4 (\Lambda +12 w)^{12}}{729 w^6 x^{12}
   (\Lambda +2 w)^{15}}
    \left(\Lambda ^9-16298496  \Lambda ^2 w^7-32908032 \Lambda^3 w^6\right. 
  \\
  \nonumber
    && \quad -13473216 \Lambda ^4 w^5-2372256 \Lambda^5 w^4  -114720 \Lambda ^6 w^3-1456 \Lambda^7 w^2 
\\
\nonumber
    && \quad \left.+4665600 \Lambda  w^8+1119744   w^9-138 \Lambda^8 w\right)  
\end{eqnarray}
\end{subequations}
To find a range of $w$ in which the algebraic type of the ASD Weyl tensor satisfies criteria listed in Table \ref{kryteria_niezdegenerowania} we need to find roots and poles of $\mathcal{D}$, $\mathcal{P}$ and $\mathcal{R}$. We focus only on the points relevant for inequalities given in Table \ref{kryteria_niezdegenerowania}.

The roots of $\mathcal{D}$ can be easily found
\begin{equation}
\label{miejsca_zerowe_D}
    w_{\mathcal{D}_1} = \left( \frac{49}{2} + 10 \sqrt{6} \right) \Lambda \approx 48.9949\Lambda   \quad\quad w_{\mathcal{D}_2} = \left(  \frac{49}{2} - 10 \sqrt{6}  \right)\Lambda  \approx 0.0051\Lambda  
\end{equation}
Similarly, for $\mathcal{P}$ one finds\footnote{The exact algebraic form of roots  $w_{\mathcal{P}_1}$, $w_{\mathcal{P}_2}$, $w_{\mathcal{P}_3}$ and $w_{\mathcal{P}_4}$ is too obscure to be useful. Thus we work with numerical approximations.}
\begin{equation}
\label{miejsca_zerowe_P}
    w_{\mathcal{P}_1} \approx 4.7017 \Lambda \quad\quad w_{\mathcal{P}_2} \approx 0.2714 \Lambda \quad \quad  w_{\mathcal{P}_3} \approx -0.07033 \Lambda \quad \quad w_{\mathcal{P}_4} \approx -0.9028 \Lambda
\end{equation}
There are five real roots of $\mathcal{R}$ which can be found only numerically. There are also 4 complex roots which are not important for our considerations and one pole $w_{\mathcal{R}_4}$. These points read
\begin{multline}
\label{miejsca_zerowe_R}
    w_{\mathcal{R}_1} \approx 3.2990 \Lambda \quad\quad w_{\mathcal{R}_2} \approx 0.0065  \Lambda \quad \quad  w_{\mathcal{R}_3} \approx -0.0802 \Lambda \quad \quad w_{\mathcal{R}_4} = - 0.5 \Lambda   \\ w_{\mathcal{R}_5} \approx -1.1303  \Lambda \quad \quad w_{\mathcal{R}_6} \approx -5.8537 \Lambda \quad \quad 
\end{multline}

In a complex case it is clear that $\mathcal{D} \ne 0$ except points $w \in \{ w_{\mathcal{D}_1}, w_{\mathcal{D}_2}, 0, -\frac{\Lambda}{12} \}$ and except point $w = - \frac{\Lambda}{2}$ where it is undetermined. 

A discussion in a real case is slightly more complicated. Because of the factor $- \Lambda$ which stands in $\mathcal{P}$ the analysis must be separated into two cases, $\Lambda > 0$ and $\Lambda < 0$. 

Let us consider the case $\Lambda > 0$ first. Note, that the solution (\ref{rozwiazanie_A35_alpha_zero}) is real and nonzero only for $w > - \frac{\Lambda}{12}$. Denoting $w_{\mathcal{Z}} := - \frac{\Lambda}{12} \approx -0.0833 \Lambda$ we find
\begin{eqnarray}
\label{zakresy_dla_Lambda_wiecej}
\textrm{type } [\textrm{I}_{rc}]: \ && \ w \in ( w_{\mathcal{D}_{2}}, w_{\mathcal{D}_{1}})
\\ \nonumber
\textrm{type } [\textrm{I}_{c}]: \ && \ w \in  (w_{\mathcal{Z}}, w_{\mathcal{D}_{2}} ) \cup (w_{\mathcal{D}_{1}}, \infty) \setminus \{ 0 \}
\end{eqnarray}
If $\Lambda < 0$ then 
\begin{eqnarray}
\label{zakresy_dla_Lambda_mniej}
\textrm{type } [\textrm{I}_{r}]: \ && \ w \in ( w_{\mathcal{Z}}, w_{\mathcal{R}_{4}})
\\ \nonumber
\textrm{type } [\textrm{I}_{c}]: \ && \ w \in  (w_{\mathcal{R}_{4}}, \infty)
\end{eqnarray}
From (\ref{zakresy_dla_Lambda_wiecej}) and (\ref{zakresy_dla_Lambda_mniej}) it follows that the solution covers all algebraically general types in a real case.


\section{Concluding remarks}
\label{Section_5}
\setcounter{equation}{0}

In the current paper we furthered an analysis of the pKE-spaces which are algebraically general in both complex and real neutral cases. Our considerations were focused on spaces equipped with 3D algebra of infinitesimal symmetries containing a pair of commuting Killing vectors. With such a strong symmetry assumed, the $\mathcal{HH}$-equation was reduced to a first--order ODE which is an Abel equation of the second kind. All obtained equations are, unfortunately, quite nontrivial and for the most part we were not able to solve them. Nevertheless, in a single case we celebrated an explicit solution. The metric (\ref{jedyna_jawna_metryka}) is - according to our best knowledge - the first explicitly known example of a gpKE-metric.

This interesting result gives the ultimate answer to the question \textbf{Q} mentioned at the beginning of the paper. The answer is positive. The ASD Weyl tensor of an pKE-space can be of an arbitrary Petrov-Penrose type. Hence, a Cartan quartic of the associated $(2,3,5)$-distribution also has this property. 

It could be an interesting task to investigate gpKE-spaces equipped with semisimple 3D algebra of infinitesimal symmetries $A_{3,8}$ and $A_{3,9}$ (i.e., without an assumption that a pair of Killing vectors commute) or with 2D algebra of infinitesimal symmetries $2A_{1}$. In the latter case the $\mathcal{HH}$-equation has the form (\ref{rownanie_hiperniebianskie_dwa_komutujace_Killingi}). These problems are being studied now, but the results will be presented elsewhere.

\end{document}